\title{A static cost analysis for a higher-order language}
\def\wesaddr{Department of Mathematics and Computer Science \\
Wesleyan University \\
Middletown, CT 06459 USA}
\author{N.~Danner}
\address{\wesaddr}
\email{ndanner@wesleyan.edu}
\author{J.~Paykin}
\address{\wesaddr}
\email{jpaykin@wesleyan.edu}
\thanks{Current address for J.~Paykin:  
Department of Computer and Information Science,
University of Pennsylvania, Philadelphia, PA 19104 USA.}
\author{J.S.~Royer}
\address{Department of Electrical Engineering and Computer Science \\
Syracuse University \\
Syracuse, NY 13244 USA}
\email{jsroyer@syr.edu}
\def\endfront@text{}
\lstdefinestyle{mlcode}{
    language=ML,
    literate={->}{{${}\rightarrow{}$}}{2},
    basicstyle=\normalfont,
    keywordstyle=\texttt,
    morekeywords={int, bool, real, char, string, true, false, not, hd, tl},
    identifierstyle=\texttt,
    commentstyle=\texttt,
    stringstyle=\texttt,
    tabsize=2,
    showstringspaces=false,
    mathescape=true,
}
\def\dom{\mathrm{Dom}\;}
\let\cross=\times
\def\setseparator{\mid}
\newcommand{\set}[2][\relax]{
  \ifx#1\relax
    \{#2\}
  \else
    \ifx#1\left
      \left\{#2\right\}
    \else
      \csname #1l\endcsname\{#2\csname #1r\endcsname\}
    \fi
  \fi
}
\newcommand{\setst}[3][\relax]{\set[#1]{#2\setseparator#3}}
\def\arrow{\mathbin{\rightarrow}}
\def\llambda{{\lambda\hskip-.45em\lambda}}	
\def\fv{\mathop{\mathrm{fv}}\nolimits}
\def\trans#1{\lVert{#1}\rVert}
\def\Exp{\mathrm{Exp}}
\def\nilexp{\lstinline!nil!}
\def\ifexp#1#2#3{\lstinline!if $\;#1\;$ then $\;#2\;$ else $\;#3$!}
\def\caseexp#1#2#3#4#5{\lstinline!case $\;#1\;$ of ($#2, [#3, #4]#5$)!}
\def\foldexp#1#2#3#4#5#6{\lstinline!fold $\;#1\;$ of ($#2, [#3, #4, #5]#6$)!}
\def\TT{\mathit{tt}}
\def\FF{\mathit{ff}}
\let\evalto\downarrow
\def\oftype{\mathbin{:}}
\let\proves\vdash
\def\cost{\mathop{\mathrm{cost}}\nolimits}
\def\pot{\mathop{\mathrm{pot}}\nolimits}
\def\dally{\mathop{\mathrm{dally}}\nolimits}
\def\doRule#1{\ifx#1\relax\else\RightLabel{#1}\fi}
\newcommand{\ndAXC}[2][\relax]{\AXC{$\mathstrut$}\doRule{#1}\UIC{#2}}
\newcommand{\ndTIC}[2][\relax]{\doRule{#1}\TIC{#2}}
\def\eval#1#2{{#1}\evalto{#2}}
\def\typenohyp#1#2{{#1}\oftype{#2}}
\def\typehyp#1#2#3{{#1}\proves\typenohyp{#2}{#3}}
\newcommand{\typing}[3][\relax]{\ifx#1\relax\typenohyp{#2}{#3}\else\typehyp{#1}{#2}{#3}\fi}
\newcommand{\typingG}[2]{\typing[\Gamma]{#1}{#2}}
\newcommand{\typingE}[2]{\typing[\underbar{~}]{#1}{#2}}
\def\cl#1#2{{#1}\,#2}
\def\cl*#1#2{({#1})#2}
\def\emptyenv{\{\}}
\def\extend#1#2#3{#1\{#2\mapsto#3\}}
\def\typot#1{\langle\!\langle#1\rangle\!\rangle}
\let\cpxy\trans
\def\N{\mathbf{N}}
\def\baseb{\mathsf{b}}
\let\bmax\vee
\def\pcaseexp#1#2#3#4#5{\lstinline!pcase $\;#1\;$ of ($#2, [#3, #4]#5$)!}
\def\pfoldexp#1#2#3#4#5#6{\lstinline!pfold $\;#1\;$ of ($#2, [#3, #4, #5]#6$)!}
\def\pcpxy#1{#1^{\Box}}
\def\parrow#1#2{{#1}\arrow^{\Box}{#2}}
\def\rec{\mathit{rec}}
\let\bounded\sqsubseteq
\def\vbounded{\bounded^{\mathrm{val}}}
\def\den#1#2{\llbracket{#1}\rrbracket{#2}}
\begin{document}

\titlecomment{%
ACM, 2013. This is the authors' version of the work. It is posted
here by permission of ACM for your personal use. Not for redistribution.
The definitive version was published in 
\emph{Programming Languages meets Program Verification 2013}.
}

\subjclass{%
F.3.1 [Logics and Meanings of Programs]: Specifying and Verifying and Reasoning 
about Programs---Mechanical verification; 
F.2.m [Analysis of Algorithms and Problem Complexity]: Miscellaneous;
F.3.2 [Logics and Meanings of Programs]: Semantics of Programming 
Languages---Program analysis}

\keywords{%
Higher-order complexity; automated theorem proving; certified bounds.}

\begin{abstract}
We develop a static complexity analysis for a
higher-order functional language with structural list recursion. 
The complexity of an expression is a pair consisting of a cost
and a potential.  The former is defined to be the 
size of the expression's evaluation derivation in a standard
big-step operational semantics.
The latter is a measure of the ``future'' cost of using
the value of that expression.
A translation function $\trans{\cdot}$ maps target 
expressions to complexities. Our main result is the following
Soundness Theorem: If $t$ is a term in the target language, then the cost 
component of~$\trans{t}$ is an upper bound on the cost of evaluating $t$.
The proof of the Soundness Theorem is formalized in Coq, providing
certified upper bounds on the cost of any expression in the target language.
\end{abstract}

\maketitle

\section{Introduction}

Though cost analyses are well-studied, they are traditionally
performed by hand in a relatively ad-hoc manner.
Formalisms for (partially) automating the analysis of higher-order
functional languages have been developed by, e.g.,
\citet{shultis:complexity},
\citet{sands:thesis},
\citet{van-stone:thesis}, and \citet{benzinger:tcs04}.\footnote{This
is an incomplete list; we focus here only
on work that directly addresses the analysis of higher-order 
functional languages and/or the automation of that analysis, and
we discuss these systems in more detail in Section~\ref{sec:related_work}.}  
These formalisms map target-language programs
into a domain of complexities, which can then be reasoned about
more-or-less formally.  The translations carry over information regarding
the values of subexpressions, which can then be ``discarded'' during
reasoning about cost should that be appropriate.

In this paper we aim for a similar goal, but take an approach
inspired by the work
of \cite{danner-royer:two-algs}.  There we analyzed a programming
language for type-level~$2$ functions in a restricted type system
motivated by work in implicit computational complexity.
The goal there was to prove that all programs in our formalism are
computable in type-$2$ polynomial time.  Here we take the same
analysis tools and apply them to a version of G\"odel's System~$T$,
not to establish fixed time bounds, but to construct expressions
that bound the cost of the given program.
The key difference between our approach and those discussed above is that
we aim for upper bounds on cost in terms of input
size, rather than an exact analysis
in terms of values.
This more modest goal allows us to develop a notion of complexity
that bounds the run-time cost of any target program evaluation
on inputs of a given size. 
Besides providing us with a simpler setting in which to reason about
complexity, the absence of values in the upper bounds
is also in line with typical algorithm analysis. 
A long-term goal is to incorporate well-established analysis 
techniques into the formal system of reasoning that we present here.

In this paper we consider a higher-order language, defined in
Section~\ref{sec:target_lang}, over integers and integer lists
with structural list recursion. Since the
interesting analyses are done in terms of the sizes of the lists involved,
we declare all integers to be of some constant size, which we
denote by~$1$.
As an
example of our analysis, let us consider list insertion, defined by
\begin{lstlisting}[basicstyle=\small]
ins (x, nil)   = [x]
ins (x, y::ys) = if (x<=y) then x::y::ys else y::(ins(x, ys))
\end{lstlisting}
A simple cost analysis of 
\lstinline!ins! yields a recursive cost function \lstinline!ins_c! for which
the recurrence argument is the length of the list argument
of \lstinline!ins!:
\[
\lstinline!ins_c!(0) = c_0
\qquad
\lstinline!ins_c!(n+1) = c_1 + \bigl(c_2 \bmax (c_3+\lstinline!ins_c!(n))\bigr)
\]
where the $c_i$ are constants.  The maximum in the
recursion clause ensures that we need not consider the value of the
test \lstinline!x <= y!.  An easily-formalized proof by induction
tells us that $\lstinline!ins_c!(n)\in O(n)$.
But this is not enough for our purposes. 
We want our analyses to be compositional, so that we can directly
use the analysis of, say, \lstinline!ins! in the analysis of
insertion-sort defined by
\begin{lstlisting}
ins_sort xs = fold ins xs nil
\end{lstlisting}
An implicit part of the analysis of \lstinline!ins_sort! uses
the \emph{size} of \lstinline!ins xs! in terms of the size of \lstinline!xs!.
Thus in addition to the cost analysis above, our approach also generates
a size analysis, which we refer to as potential (more on this
terminology momentarily).
The following is a possible potential analysis of \lstinline!ins!:
\[
\lstinline!ins_p!(0) = 1
\qquad
\lstinline!ins_p!(n+1) = (n+1) \bmax (1+\lstinline!ins_p!(n))
\]
where again we take a maximum so that we need not consider the value
of the test \lstinline!x <= y!.

Since we are interested
in higher-order languages, we also need to consider algorithms
which take functions as input. For this, we need some way
to represent the cost associated with using a function; this,
combined with size, is our
notion of \emph{potential}, which we describe more fully
in Section~\ref{sec:complexity_lang}.
For type-level~$0$
objects, we can think of potential as ordinary size, which is why
\lstinline!ins_p! is a function from integers to integers.
The potential of a function essentially encompasses
its cost analysis---the potential of $f$
is a map from potentials $p$ (representing the potential of the
argument) to the complexity of
applying $f$ to an argument of size $p$. 
Our complexity analysis of an expression yields 
a pair consisting of a cost (of computing that expression to obtain
a value) and a potential (representing the size of that value). 
We refer to such a pair as a \emph{complexity}.

The complexity language that we define in Section~\ref{sec:complexity_lang}
allows us to express recurrences like \lstinline!ins_c! and \lstinline!ins_p!.
In Section~\ref{sec:translation} 
we define a translation function $\trans\cdot$ from the target language
to the complexity language so that $\trans{\lstinline!ins!}$ 
encompasses these two recurrences.
The recurrences that result from the translation
will be more complicated than those shown here
because they take complexities, as opposed to sizes,
as arguments. But as shown in Section~\ref{sec:ins_sort},
it is easy to extract 
\lstinline!ins_c! and \lstinline!ins_p! from $\trans{\lstinline!ins!}$.

As another example, consider the \lstinline!map! function defined
by
\begin{lstlisting}
map h nil       = nil
map h (x :: xs) = (h x) :: (map h xs)
\end{lstlisting}
The natural cost analysis of \lstinline!map! yields a recurrence
that depends on the cost of applying \lstinline!h! to account for
the term \lstinline!h x! in the recursion clause.  Note that this
is \emph{not} the cost of \lstinline!h! itself (or any specific
function argument).  Indeed, since any specific function argument is
likely to be expressed as a $\lambda$-abstraction, the cost of
such an argument would be~$1$.  Instead, we must refer to the
\emph{potential} of the function argument~$h$; applied to the potential of
a list element~$x$ (representing the size of~$x$),
we obtain the complexity of the application $h(x)$.  That complexity
is a pair consisting of the cost of the application and its potential.
Taking into account that our integers have constant size 1,
we end up with a cost analysis that looks something like the following,
where $h$ now represents the \emph{complexity} of the function argument:%
\footnote{As we will see, parameters in complexity functions that
correspond to (list-) recursion arguments represent the potential of
such arguments.  Parameters that correspond to non-recursion arguments
represent the complexity of those arguments.}
\[
\lstinline!map_c!(h, 0) = 0
\qquad
\lstinline!map_c!(h, n+1) = (h_p(1))_c + \lstinline!map_c!(h, n)
\]
where $(\cdot)_c$ and $(\cdot)_p$ extract the cost and potential
components of a complexity, respectively.
The potential analysis is straightforward and does not depend on
the function argument:
\[
\lstinline!map_p!(h, 0) = 0
\qquad
\lstinline!map_p!(h, n+1) = 1 + \lstinline!map_p!(h, n)
\]

It is nice that our translation gives the expected costs in these
examples, but of course we want to know that it is sound, in the
sense that $\trans{t}$ bounds the complexity of~$t$.  We prove such
a Soundness Theorem in Section~\ref{sec:soundness} and state
clearly the corollary that an upper bound on the evaluation of cost~$t$ can be
derived directly from~$\trans{t}$.

The function $\trans\cdot$ is computable, and hence
provides a formal link between the program source code and its
complexity bound.
We have implemented a subset of the target and complexity languages, 
translation, and
proof of the Soundness Theorem in Coq, and we discuss some of the
details of this in Section~\ref{sec:implementation}.
The formalization thus provides a mechanism for providing
\emph{certified} upper bounds on the complexity of target language
expressions.   This is in distinction to a more traditional
ad-hoc analysis; such an analysis, even if formalized in a system
such as Coq, is not tied to the source code in a machine-checkable
manner, and so one still does not have a proof that the code to be
executed satisfies the cost bounds that are asserted.

\section{Target language}
\label{sec:target_lang}

The target language is essentially a variant on G\"odel's System~$T$; its
syntax, typing, and operational semantics
are given in Figures~\ref{fig:target_grammar}--\ref{fig:target_semantics}.
The language provides for higher-order programming over integers, booleans, and
integer lists (the \emph{base types}).  
The latter are defined as a recursive datatype, and
the datatype definition automatically provides for structural recursion.
The work in this paper extends to other comparable recursive datatypes;
we treat the special case here to cut down on notation.%
\footnote{Our language does not support general recursion, and
recursive datatypes in general pose some difficulties; we discuss both
issues in Section~\ref{sec:concl_further_work}.}
Since the language is straightforward, we omit many of the details.
A \emph{term} is a typeable expression $\typingG t \tau$.
The operational semantics defines a big-step call-by-value evaluation
relation that relates \emph{closures} $\cl* {\typingG t \tau}\xi$ 
to \emph{values} $\cl v\theta$ (we usually drop the typing details and
just write $\cl t\xi$).
A closure $\cl t\xi$ consists of a term~$t$ and a
\emph{value environment}~$\xi$ that maps variables to values 
such that $\dom\xi$ contains all of $t$'s free variables.
We write $\emptyenv$ for the empty environment.
A value $\cl v\theta$ consists of a \emph{value expression}
and a value environment.  A value expression is any of the
following:
\begin{itemize}
\item A boolean value $\TT$ or $\FF$ or integer value $n\in\mathbf{Z}$;
\item Any finite sequence $(n_0,\dots,n_{k-1})$ of integers;
\item Any expression of the form $\lambda x.r$.
\end{itemize}
We often write $(n,ns)$ for the sequence $(n,n_0,\dots,n_{k-1})$ when
$ns = (n_0,\dots,n_{k-1})$ and write $()$  for the empty sequence.
Since the grammar does not allow lists of higher-order expressions
and the semantics does not
have side-effects, we can safely drop
the environment at leaves of evaluation derivations that derive values
of base type.  We do so in order to simplify the statements of
the rules for evaluating
lists; it is not necessary in practice.
To further clean up notation, we often write $\TT$ or $ns$ instead
of $\cl\TT\emptyenv$ or $\cl{ns}{\emptyenv}$.

The evaluation of 
$\foldexp{r}{s}{x}{xs}{w}{t}$ when $\eval{\cl r\xi}{\cl{(n,ns)}{}}$
deserves some comment.  
A more natural rule might be
\begin{prooftree}
  \AXC{$\eval{\cl r\xi}{\cl{(n,ns)}{}}$}
  \AXC{$\eval{\cl*{\foldexp{ns}{s}{x}{xs}{w}{t}}{\xi}}{\cl {v_0}{\theta_0}}$}
  \AXC{$\eval{\cl t{\xi_1}}{\cl v\theta}$}
  \insertBetweenHyps{\hskip-.025in}
\TIC{$\eval{\cl*{\foldexp{r}{s}{x}{xs}{w}{t}}{\xi}}{\cl v\theta}$}
\end{prooftree}
However, the \lstinline!fold! ``expression'' in the hypothesis is not
well-formed, because $ns$ is not an expression, it is a value.  There is
an obvious isomorphism between the two that we could employ, but then
evaluating the hypothesis \lstinline!fold! expression would require
re-evaluating the list corresponding to~$ns$ which would add (possibly
non-trivially) to the cost of the evaluation.
Thus we choose a fresh variable~$y$
and bind~$y$ to~$ns$ in the environment; as a result, every future
evaluation of the recursion argument will have cost~$1$.\footnote{If
\lstinline!fold! were defined in terms of a more general 
\lstinline!letrec! constructor, then a standard operational semantics
of \lstinline!letrec! as in
\citep{reynolds:theories} would introduce a comparable fresh variable.}

We define the \emph{cost} of a closure $t\xi$, $\cost(t\xi)$, to be the
size of the evaluation derivation of $t\xi$ (it is straightforward to
prove that derivations are unique).  We charge unit cost for arithmetic
operations and count every inference rule.  We can easily
adapt the system for other notions of cost (e.g., counting 
only creation of cons-cells) by modifying the complexity semantics 
described in Section~\ref{sec:complexity_lang}.

\begin{figure}
\begin{align*}
e \in \Exp &::= X \mid Z \mid \lstinline!true! \mid \lstinline!false! \mid 
                \nilexp \mid e :: e \mid e\mathrel{R} e \mid \\
  & \ifexp{e}{e}{e} \mid \lambda x.e \mid e\,e \\
  & \caseexp{e}{e}{x}{xs}{e} \mid \foldexp{e}{e}{x}{xs}{w}{e}
\end{align*}
\caption{Target language grammar.  $X$ ranges over variable identifiers,
$Z$ over integer constants, and $R$ over numerical binary relations
such as $\leq$ and $>$ and numerical binary operators such as~$+$
and~$\times$.}
\label{fig:target_grammar}
\begin{align*}
\sigma,\tau &::= \lstinline!int! \mid \lstinline!bool! \mid 
    \lstinline!int*! \mid \sigma\arrow\tau \\
\lstinline!int*! &::= \lstinline!nil! \mid \lstinline!$::\;$ of (int,int*)!
\end{align*}
\begin{gather*}
  \AXC{$\typingG{e_0}{\lstinline!int*!}$}
  \AXC{$\typingG{e_1}{\sigma}$}
  \AXC{$\typing[\Gamma,x\oftype\lstinline!int!,xs\oftype\lstinline!int*!]
               {e_2}{\sigma}$}
\TIC{$\typingG{\caseexp{e_0}{e_1}{x}{xs}{e_2}}{\sigma}$}
\DisplayProof
\\
  \AXC{$\typingG{e_0}{\lstinline!int*!}$}
  \AXC{$\typingG{e_1}{\sigma}$}
  \AXC{$\typing[\Gamma,x\oftype\lstinline!int!,xs\oftype\lstinline!int*!,
                w\oftype\sigma]
               {e_2}{\sigma}$}
  \insertBetweenHyps{\hskip-0pt}
\TIC{$\typingG{\foldexp{e_0}{e_1}{x}{xs}{w}{e_2}}{\sigma}$}
\DisplayProof
\end{gather*}
\caption{Target language types and typing.  Rules not shown here are the
expected ones.}
\end{figure}

\begin{figure*}
\begin{gather*}
\ndAXC{$\eval{\cl x\xi}{\xi(x)}$}
\DisplayProof
\qquad
\ndAXC{$\eval{\cl{c}{\xi}}{\cl{\underline c}{\emptyenv}}$}
\DisplayProof
\qquad
\ndAXC{$\eval{\cl*{\lambda x.r}\xi}{\cl*{\lambda x.r}{\xi}}$}
\DisplayProof
\\
  \AXC{$\eval{\cl r\xi}{\cl{n_r}\emptyenv}$}
  \AXC{$\eval{\cl s\xi}{\cl{n_s}\emptyenv}$}
  \AXC{$n_r\mathrel R n_s$}
\RightLabel{($R = \mathord<,\mathord\leq,\mathord=,\dotsc$)}
\ndTIC{$\eval{\cl*{r\mathrel R s}{\xi}}{\cl\TT\emptyenv}$}
\DisplayProof
\\
  \AXC{$\eval{\cl r\xi}{\cl{n_r}\emptyenv}$}
  \AXC{$\eval{\cl s\xi}{\cl{n_s}\emptyenv}$}
  \AXC{$\neg(n_r\mathrel R n_s)$}
\RightLabel{($R = \mathord<,\mathord\leq,\mathord=,\dotsc$)}
\ndTIC{$\eval{\cl*{r\mathrel R s}{\xi}}{\cl\FF\emptyenv}$}
\DisplayProof
\\
  \AXC{$\eval{\cl r\xi}{\cl{n_r}\emptyenv}$}
  \AXC{$\eval{\cl s\xi}{\cl{n_s}\emptyenv}$}
  \AXC{$n = n_r\bullet n_s$}
\RightLabel{($\bullet = \mathord+,\mathord-,\mathord\times,\dotsc$)}
\ndTIC{$\eval{\cl*{r\bullet s}{\xi}}{\cl n\emptyenv}$}
\DisplayProof
\\
\ndAXC{$\eval{\cl{\nilexp}{\xi}}{\cl{()}{\emptyenv}}$}
\DisplayProof
\qquad
  \AXC{$\eval{\cl r\xi}{\cl{n}{\emptyenv}}$}
  \AXC{$\eval{\cl s\xi}{\cl{ns}{\emptyenv}}$}
\BIC{$\eval{\cl*{r::s}{\xi}}{\cl{(n,ns)}{\emptyenv}}$}
\DisplayProof
\\
  \AXC{$\eval{\cl r\xi}{\cl*{\lambda x.r_0}{\theta_0}}$}
  \AXC{$\eval{\cl s\xi}{\cl{v_1}{\theta_1}}$}
  \AXC{$\eval{\cl{r_0}{\extend{\theta_0}{x}{\cl{v_1}{\theta_1}}}}
             {\cl v\theta}$}
\TIC{$\eval{\cl*{r\,s}\xi}{\cl v\theta}$}
\DisplayProof
\\
  \AXC{$\eval{\cl r\xi}{\cl{()}{\emptyenv}}$}
  \AXC{$\eval{\cl s\xi}{\cl v\theta}$}
\BIC{$\eval{\cl*{\caseexp{r}{s}{x}{xs}{t}}{\xi}}{\cl v\theta}$}
\DisplayProof 
\qquad
  \AXC{$\eval{\cl r\xi}{\cl{(n,ns)}{\emptyenv}}$}
  \AXC{$\eval{\cl t{\extend \xi {x,xs}{n,ns}}}{\cl v\theta}$}
\BIC{$\eval{\cl*{\caseexp{r}{s}{x}{xs}{t}}{\xi}}{\cl v\theta}$}
\DisplayProof 
\\
  \AXC{$\eval{\cl r\xi}{\cl{()}{\emptyenv}}$}
  \AXC{$\eval{\cl s\xi}{\cl v\theta}$}
\BIC{$\eval{\cl*{\foldexp{r}{s}{x}{xs}{w}{t}}{\xi}}{\cl v\theta}$}
\DisplayProof 
\\
  \AXC{$\eval{\cl r\xi}{\cl{(n,ns)}{\emptyenv}}$}
  \AXC{$\eval{\cl*{\foldexp{y}{s}{x}{xs}{w}{t}}{\xi_0}}{\cl {v_0}{\theta_0}}$}
  \AXC{$\eval{\cl t{\xi_1}}{\cl v\theta}$}
  \insertBetweenHyps{\hskip-.025in}
\TIC{$\eval{\cl*{\foldexp{r}{s}{x}{xs}{w}{t}}{\xi}}{\cl v\theta}$}
\DisplayProof 
\\
\text{where $y$ is fresh}; \xi_0 = \extend\xi y {ns}; \xi_1 = \extend\xi {x,xs,w}{n,ns,\cl {v_0}{\theta_0}}
\end{gather*}
\caption{Target language operational semantics.  $c$ ranges over integer
and boolean constants and $\underline c$ over the corresponding values.
The third hypothesis of the relation and operation rules has unit cost.}
\label{fig:target_semantics}
\end{figure*}

\section{The complexity language}
\label{sec:complexity_lang}

Our goal is to assign a complexity~$\trans t$ to each target language 
expression~$t$.  We have
three desiderata on~$\trans\cdot$:
\begin{itemize}
\item $\trans t$ must provide an upper bound on the cost of evaluating~$t$.
\item $\trans\cdot$ must be compositional.
\item $\trans t$ must not depend on the value of any subexpression of~$t$.
\end{itemize}
Since closures, not expressions, are evaluated in the target language,
$\trans t$ will also be an expression, the meaning of which is determined
by an environment assigning complexities to its free variables.
To say that $\trans \cdot$ is compositional means that $\trans t$ depends
only on expressions $\trans s$ for subexpressions~$s$ of~$t$.
Because of the third constraint, the information we provide
about the evaluation cost of~$t$ is not as precise as that
of as the systems mentioned in the introduction.
However, this constraint is in line with
almost all the work in practical analysis of algorithms and
thus opens up the possibility of using the considerable body
of tools and tricks of analysis of algorithms in our
verifications.

A complexity measure that considers only cost is insufficient if we
want to be able to handle higher-order expressions like 
\lstinline!map!.  To see why, consider any expression
$\lambda x.r$ such that $\lstinline!map!(\lambda x.r)$ is well-typed.%
\footnote{We phrase this informal discussion in terms of evaluating
closed expressions rather than closures.}
Assuming that $\trans\cdot$ is compositional, if $\trans t$ were to provide
information on just the cost of evaluating~$t$, then since
the cost of $\lambda x.r$ is~$1$, the cost of
$\lstinline!map!(\lambda x.r)$ would be independent of~$r$.  What
we need instead is a complexity measure such that $\trans t$
not only captures
the cost of evaluating~$t$, but also the cost of \emph{using}~$t$.
We call this latter notion \emph{potential}, and a complexity will
be a pair consisting of a cost and a potential.  To gain some
intuition for the full definition, we consider the type-level~$0$
and~$1$ cases.  At type-level~$0$, the potential cost of an expression
is a measure of the size of that expression's value.
Now consider a type-level~$1$ expression~$r$.  The \emph{use} of~$r$
is its application to a type-level~$0$ expression~$s$.  The cost of
such an application is the sum of
(i)~the cost of evaluating~$r$ to a value~$\lambda x.r'$;
(ii)~the cost of evaluating $s$ to a value~$v'$;
(iii)~the cost of evaluating $r'[x\mapsto v']$; and
(iv)~a ``charge'' for the inference.  Since (iii) depends (in part)
on the size of $v'$ (i.e., the potential of~$s$), by compositionality
complexities must capture both cost and potential.
Furthermore, (iii) is defined in terms of the potential of $r$
(i.e., the potential of $\lambda x.r'$).  
Thus the potential of a 
type-level~$1$ expression should be a map from type-level~$0$ potentials to
type-level~$0$ complexities, and in general the potential of an
expression of type $\sigma\arrow\tau$ should be a map from potentials of
type-$\sigma$ expressions to complexities of type-$\tau$ expressions.

We now turn to the formal definitions given in
Figures~\ref{fig:cpxy_grammar}--\ref{fig:cpxy_semantics}.  
We define a \emph{complexity
language} over a simple type structure with products
into which our translation function~$\trans\cdot$ will map.
The \emph{full complexity types} consist of the simple types with
products over $\N$, which is intended to be the natural numbers and represents
both costs and potentials of base-type values.
The \emph{complexity} and \emph{potential} types are defined by the
following mutual induction, following our preceding discussion about
the potential of higher-order values:
\begin{enumerate}
\item If $\gamma$ is a potential type, then $\N\cross\gamma$ is a complexity
type.
\item $\N$ is a potential type.
\item If $\gamma$ is a potential type and $\tau$ a complexity type,
then $\gamma\arrow\tau$ is a potential type.
\end{enumerate}
We introduce two notations for potential types~$\gamma$:  
$\pcpxy\gamma = \N\cross\gamma$
and $\parrow\gamma\tau = \pcpxy{(\gamma\arrow\tau)}$ (remember that
the potential of a function is a map from argument potentials to
result complexities).  

The complexity expression constructors roughly mirror the target expression
constructors, and the meaning of the former is intended to capture
the complexity (cost and potential) of the latter.  
The grammar of complexity expressions is as expected for the product-related
types, except we write $t_c$ and $t_p$ for the first and second projections
(for ``cost'' and ``potential''), respectively.  The usual abstraction
and application are replaced by $\lambda_* x.r$ and $r*s$.  
As we will see,
we need an operation that ``applies'' $\trans r$ to $\trans s$; since
both are complexities, and hence pairs, ordinary application does
not suffice.  Thus we introduce the $*$ operator so that we can
define $\trans{rs} = \trans r*\trans s$.  $\lambda_* x.r$ is the corresponding
abstraction operator.  
Conditional expressions are eliminated in the complexity language.
A conditional expression is translated to (essentially) the
maximum of the complexity of its two branches.
This matches our interest in upper bounds on complexity,
ensuring that the complexity of a conditional bounds the cost
of any possible evaluation of that conditional on inputs of a
given size or smaller.
The \lstinline!fold! constructor for recursive
datatypes has a counterpart \lstinline!pfold!
in the complexity language so that
recursive definitions in the target language are translated to
recurrences in the complexity language.
Instead of branching on a constructor, \lstinline!pfold! branches potentials.
Since \lstinline!case! can be seen as a trivial version of \lstinline!fold!,
it has a corresponding counterpart \lstinline!pcase!.%
\footnote{In fact, we could also define conditionals in the target
language to be syntactic sugar for
a \lstinline!case! over the trivial recursive datatype of booleans, and
we would end up with the same complexity semantics.}

Meanings are assigned to complexity terms 
(typeable complexity expressions) through a
denotational semantics.
We take the standard denotation $\den\cdot{}$ of full complexity
types (interpreting $\N$ as the natural numbers).  If $\Gamma$ is
a full complexity type environment, we say that $\xi$ is
\emph{$\Gamma$-consistent} if $\xi(x)\in\den{\Gamma(x)}{}$ for all
$x\in\dom\Gamma$.
Finally, we define a function
$\den\cdot{\mathord-}$ that maps a complexity term
$\typingG e\tau$ and $\Gamma$-consistent environment~$\xi$ to
$\den{\typingG e \tau}\xi\in\den\tau{}$.
We write $\den e\xi$ when $\Gamma$ is clear from context.

The denotational semantics of the complexity expression constructors
describes the cost and potential of the corresponding target-language
constructors.
For example, consider the evaluation of $t = rs$ (again, we phrase
this discussion in terms of closed expressions for clarity):
\[
  \AXC{$\eval{r}{\lambda x.r'}$}
  \AXC{$\eval{s}{v'}$}
  \AXC{$\eval{\extend{r'}{x}{v'}}{v}$}
\TIC{$\eval{rs}{v}$}
\DisplayProof.
\]
$\cost(r)$ and $\cost(s)$ both contribute to $\cost(t)$.  Recalling our
earlier discussion of higher-type potentials,  if $\trans r$ and $\trans s$
are the complexities of $r$ and $s$, then
${\trans r}_p({\trans s}_p)$ (a complexity) gives both the cost of
evaluating $\extend{r'}{x}{v'}$ as well as its potential; but its
potential is also the potential of~$t$.
Thus we define the meaning of~$*$ expressions so that
\[
\den{\trans r*\trans s}{} = 
\bigl(1+\den {{\trans r}_c}{}+\den {{\trans s}_c}{} + \bigl(\den {{\trans r}_p}{}(\den {{\trans s}_p}{})\bigr)_c,
\bigl(\den {{\trans r}_p}{}(\den {{\trans s}_p}{})\bigr)_p\bigr).
\]
We frequently need to ``add cost to a complexity,'' so we
define $\dally(n, (c, p)) = (n+c, p)$. Now we can write, for example,
\[
\den{\trans r*\trans s}{} = 
\dally\bigl(1+\den {{\trans r}_c}{}+\den {{\trans s}_c}{}, \den {{\trans r}_p}{}(\den {{\trans s}_p}{})\bigr).
\]

\lstinline!pcase! and \lstinline!pfold! expressions have a slightly
more complex semantics that might be expected.
Focusing on the former,
the non-zero branch of a \lstinline!pcase! expression must take the
maximum of the two branches, rather than just the second branch, even
though a non-zero potential ought to correspond to a non-empty list.
Because our goal is to establish upper bounds
on complexity (hence potential), it 
may be that the branching expression in the target expression evaluates
to~$\nilexp$, but its translation has a non-zero potential.  As an
example, consider the term~$t$ defined by
\[
\lstinline!case (if true then nil else [0]) of ([0,0], [x,xs]nil)!
\]
The test expression translates to a complexity with potential~$1$;
if $\den{\trans t}{}$ were to take into account only the non-nil branch,
we would conclude that
$t$ has complexity $(7, 0)$ (cost~$7$, potential~$0$), whereas in fact
it has cost~$9$ and size~$2$.
A similar issue arises with \lstinline!pfold!, although in this case the cost is
not an issue; because \lstinline!pfold! is a structural recursion, the base
case expression will be evaluated, and hence its cost included in
the total cost (see Lemma~\ref{lem:cost_pfold_rec} for a precise
statement).

\begin{figure}
\begin{align*}
e ::= &X \mid N \mid e + e \mid e \bmax e \mid (e, e) \mid e_c \mid e_p \mid
       \lambda_* x.e \mid e*e \\
& \pcaseexp{e}{e}{p}{ps}{e} \mid \pfoldexp{e}{e}{p}{ps}{w}{e}.
\end{align*}
\caption{Complexity language expressions.  $X$ is a set of variables and
$N$ a set of constants for each $n\in\N$.}
\label{fig:cpxy_grammar}

\[
\sigma, \tau ::= \N \cross \gamma \qquad
\gamma ::= \N \mid \gamma\arrow\tau
\]
\caption{Complexity and potential types.  The \emph{full} complexity types
consist of the simple types with products over~$\N$.}
\label{fig:cpxy_types}

\begin{gather*}
\ndAXC{$\typing[\Gamma,x\oftype\sigma]{x}{\sigma}$}
\DisplayProof
\quad
\ndAXC{$\typingG n {\N}$}
\DisplayProof
\\
  \AXC{$\typingG r {\N}$}
  \AXC{$\typingG s {\N}$}
\BIC{$\typingG{r+s}{\N}$}
\DisplayProof
\quad
  \AXC{$\typingG r \tau$}
  \AXC{$\typingG s \tau$}
\BIC{$\typingG{r\bmax s}{\tau}$}
\DisplayProof
\\
  \AXC{$\typingG r \N$}
  \AXC{$\typing s \gamma$}
\BIC{$\typingG {(r, s)} {\pcpxy\gamma}$}
\DisplayProof
\quad
  \AXC{$\typingG r {\pcpxy\gamma}$}
\UIC{$\typingG {r_c}{\N}$}
\DisplayProof
\quad
  \AXC{$\typingG r {\pcpxy\gamma}$}
\UIC{$\typingG {r_p}{\gamma}$}
\DisplayProof
\\
  \AXC{$\typing[\Gamma,x\oftype\pcpxy\gamma]{r}{\pcpxy\eta}$}
\UIC{$\typingG {\lambda_* x.r}{\parrow\gamma{\pcpxy\eta}}$}
\DisplayProof
\quad
  \AXC{$\typingG r {\parrow\gamma{\pcpxy\eta}}$}
  \AXC{$\typingG s {\pcpxy\gamma}$}
\BIC{$\typingG {r*s} {\pcpxy\eta}$}
\DisplayProof
\\
  \AXC{$\typingG r \N$}
  \AXC{$\typingG s {\pcpxy\gamma}$}
  \AXC{$\typing[\Gamma,p\oftype\N, ps\oftype\N] t {\pcpxy\gamma}$}
\TIC{$\typingG {\pcaseexp{r}{s}{p}{ps}{t}} {\pcpxy\gamma}$}
\DisplayProof
\\
  \AXC{$\typingG r \N$}
  \AXC{$\typingG s {\pcpxy\gamma}$}
  \AXC{$\typing[\Gamma,p\oftype\N, ps\oftype\N, w\oftype\pcpxy\gamma] t {\pcpxy\gamma}$}
\TIC{$\typingG {\pfoldexp{r}{s}{p}{ps}{w}{t}} {\pcpxy\gamma}$}
\DisplayProof
\end{gather*}
\caption{Complexity language typing rules.  Recall that
$\pcpxy\gamma = \N\cross\gamma$ and
$\parrow\gamma\tau = \pcpxy{(\gamma\arrow\tau)}$.}
\label{fig:cpxy_typing}
\end{figure}

\begin{figure*}
\begin{align*}
\den{\typing[\Gamma,x\oftype\sigma]{x}{\sigma}}\xi &= \xi(x) \\
\den{\typingG n \N}\xi &= n \\
\den{\typingG{r+s}{\N}}\xi &= \den r\xi + \den s\xi \\
\den{\typingG{r\bmax s}{\tau}}\xi &= \den r\xi \bmax  \den s\xi \\
\den{\typingG{(r, s)}{\pcpxy\gamma}}\xi &= (\den r\xi,  \den s\xi) \\
\den{\typingG{r_c}{\N}}\xi &= \pi_0(\den r\xi) \\
\den{\typingG{r_p}{\gamma}}\xi &= \pi_1(\den r\xi) \\
\den{\typingG{\lambda_*x.r}{\parrow\gamma{\pcpxy\eta}}}\xi &=
(1, \llambda p.\den{\typing[\Gamma,x\oftype\pcpxy\gamma]{r}{\pcpxy\eta}}{\extend\xi x {(1, p)}}) \\
\den{\typingG{r*s}{\pcpxy\eta}}{\xi} &= 
\dally(1+\den{r_c}\xi+\den{s_c}\xi, \den{r_p}\xi(\den{s_p}\xi)) \\
\den{\typingG{\pcaseexp{r}{s}{p}{ps}{t}}{\pcpxy\gamma}}\xi &=
\begin{cases}
\den s\xi,&\den r\xi = 0 \\
\den s\xi \bmax \den t{\xi_1},&\den r\xi=q+1
\end{cases}
\\
& \text{where } \xi_1 = \extend\xi{p,ps}{1,q} \\
\den{\typingG{\pfoldexp{r}{s}{p}{ps}{w}{t}}{\pcpxy\gamma}}\xi &=
\begin{cases}
\den s\xi,&\den r\xi = 0 \\
\bigl(2+\den{\rec_c}{\xi_0} + \den{t_c}{\xi_1}, \\
\quad\den{s_p}\xi\bmax\den{t_p}{\xi_1}\bigr),&\den r\xi=q+1
\end{cases}
\\
& \text{where } \rec = \pfoldexp y s p {ps} w t; \\
& \xi_0 = \extend \xi y q; \xi_1 = \extend \xi {p, ps, w} {1, q, (1, \rec_p)}
\end{align*}
\caption{Denotational semantics of complexity terms.  $\llambda p.\dotsb$ 
denotes the semantic function that maps~$p$ to~$\dotsb$.}
\label{fig:cpxy_semantics}
\end{figure*}

\section{Translation from target to complexity language}
\label{sec:translation}

The translation from target language to complexity language is given in
Figure~\ref{fig:expr_trans}.
We assume a bijection between target and complexity variables, so that
when we write $\trans x = x$, the occurrence of $x$ on the left-hand side
is a target variable and the occurrence of $x$ on the right-hand side is
the corresponding complexity variable.
The translation of the simplest target expression constructors gives
expressions that describe directly the cost and size of the corresponding
target expressions; more complex constructors are translated to the
corresponding complexity constructors, and we leave it to the
denotational semantics to extract the complexities from there.  
There is a choice to be made regarding which constructors in the
target language have corresponding constructors in the complexity
language.  Certainly \lstinline!fold! must be in this list, so that
recursive programs are mapped to recurrences.  But whether abstraction
and application should have counterparts, or whether they should be
translated into complexity expressions that directly describe the
denotational semantics of $\lambda_*$ and $*$, is not completely clear.
The choice we have made seems to allow simpler reasoning about the
translated expressions, reasoning that we should be able to easily
formalize in Coq, and which would also be more familiar to programmers.

\begin{figure*}
\[
\typot\baseb = \N
\qquad
\typot{\sigma\arrow\tau} = \typot\sigma\arrow\cpxy\tau
\qquad
\cpxy\tau = \N\cross\typot\tau = \pcpxy{\typot\tau}
\]
\begin{align*}
\trans x &= x \\
\trans c &= (1, 1) \\
\trans{r\mathbin{R}s} &= (2+\trans r_c+\trans s_c, 1) \\
\trans{\nilexp} &= (1, 0) \\
\trans{r::s} &= (1+\trans r_c+\trans s_c, 1+\trans s_p) \\
\trans{\ifexp{r}{s}{t}} &= \dally(1+\trans r_c,\trans s\bmax\trans t) \\
\trans{\lambda x.r} &= \lambda_*x.\trans r \\
\trans{rs} &= \trans r*\trans s \\
\trans{\caseexp{r}{s}{x}{xs}{t}} &= \dally(1+\trans{r}_c, \\
  &\qquad \lstinline!pcase!~\trans r_p~\lstinline!of!~(\trans s,
  [p, ps]\extend{\trans t}{x, xs}{(1, p),(1,ps)})) \\
\trans{\foldexp{r}{s}{x}{xs}{w}{t}} &= \dally(1+\trans{r}_c, \\
  &\qquad\lstinline!pfold!~\trans t_p~\lstinline!of!~(\trans s,
  [p, ps, w]\extend{\trans t}{x,xs}{(1,p),(1,ps)}))
\end{align*}
\caption{Translation from target types and expressions 
to complexity types and expressions.
$c$~ranges over integer and boolean constants.}
\label{fig:expr_trans}
\end{figure*}

Before proceeding to examples, we note that translation preserves
type derivations.
For a target type context~$\Gamma$, define
$\trans\Gamma = \setst{(x \oftype \trans\tau)}{(x\oftype\tau)\in\Gamma}$.
\begin{prop}
If $\typingG r\tau$, then $\typing[\trans\Gamma]{\trans r}{\trans\tau}$.
\end{prop}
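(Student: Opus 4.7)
The plan is to prove this by induction on the derivation of $\typingG r\tau$; since typing in the target language is syntax-directed, this reduces to a case analysis on the head constructor of~$r$. In each case I unfold the defining clause of $\trans{\cdot}$ from Figure~\ref{fig:expr_trans} and verify, using the rules of Figure~\ref{fig:cpxy_typing}, that the resulting complexity expression has type $\trans\tau$ in the context $\trans\Gamma$.

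The central bookkeeping identity is $\trans\tau = \cpxy\tau = \pcpxy{\typot\tau} = \N\cross\typot\tau$ for every target type, with $\typot b = \N$ for each base type~$b$, so $\trans b = \N\cross\N$. The cases for variables, constants, relations and arithmetic, $\nilexp$, cons, and conditionals are then immediate: each translation is either a literal pair of natural-number expressions built from the cost/potential projections of smaller translations (which have type $\N\cross\N$ by the induction hypothesis) or, for conditionals, a $\dally$ applied to a $\bmax$ of two expressions of type $\trans\tau$. For $\lambda x.r$ of type $\sigma\arrow\tau$, the induction hypothesis in the extended context gives $\typing[\trans\Gamma,x\oftype\trans\sigma]{\trans r}{\trans\tau}$, and the $\lambda_*$ rule instantiated with $\gamma = \typot\sigma$ and $\pcpxy\eta = \trans\tau$ then assigns $\lambda_* x.\trans r$ the type $\parrow{\typot\sigma}{\trans\tau} = \pcpxy{(\typot\sigma\arrow\trans\tau)}$, which is exactly $\trans{\sigma\arrow\tau}$; application is handled symmetrically with the $*$ rule.

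The \lstinline!case! and \lstinline!fold! cases are the most notation-heavy but structurally the same as one another. Since the scrutinized subexpression has target type \lstinline!int*!, its translation has type $\N\cross\N$, and its $p$-projection has type $\N$, matching the first hypothesis of the \lstinline!pcase! (resp.~\lstinline!pfold!) rule. The substituted values $(1,p)$ and $(1,ps)$ both have type $\N\cross\N = \trans{\lstinline!int!} = \trans{\lstinline!int*!}$, so a routine substitution lemma for the complexity language---if $\typing[\Gamma,x\oftype\sigma]{e}{\tau}$ and $\typing[\Gamma]{e'}{\sigma}$ then $\typing[\Gamma]{e[x\mapsto e']}{\tau}$---shows that the bodies are well-typed in the contexts demanded by the \lstinline!pcase!/\lstinline!pfold! rule. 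I do not anticipate any serious obstacle; the proof is essentially a mechanical check, and the main risks are (i)~needing to prove the above substitution lemma (a routine induction on complexity expressions) and (ii)~miscounting indices when juggling the identity $\trans\tau = \N\cross\typot\tau$ in the $\lambda_*$ and $*$ cases.
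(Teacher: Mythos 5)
Your proof is correct, and it is precisely the routine induction on typing derivations that the paper has in mind: the paper states this proposition without proof, treating it as immediate from the definitions in Figures~\ref{fig:cpxy_typing} and~\ref{fig:expr_trans}, and your case analysis (including the observation that the \lstinline!pcase!/\lstinline!pfold! cases need a standard substitution lemma for the complexity language) fills in exactly what is elided. The only wrinkle is that the figure's clause for \lstinline!fold! scrutinizes $\trans t_p$ where it should read $\trans r_p$ (as the soundness proof later confirms); your reading of it as the translation of the scrutinee is the intended one and your argument goes through unchanged.
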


\subsection{Examples}
\label{sec:examples}

We show the results of translating the insertion-sort and map functions
in this section.  We freely transform expressions in the complexity
language according to validities in the semantics, for example adding
natural numbers and computing maximums when possible.  The soundness
of each such equality or inequality is easily proved, and so such
transformations could easily be incorporated in a formalized proof
that simplifies $\trans{t}$ into a more amenable form.

\subsubsection{Insertion-sort}
\label{sec:ins_sort}
{
\def\ins{\mathit{ins}}
\def\isort{\mathit{isrt}}
\def\map{\mathit{map}}
We start by considering the list-insertion function
defined by
\begin{lstlisting}[basicstyle=\small]
ins = $\lambda$x.$\lambda$xs.fold xs of (x :: nil, 
      [y, ys, w] if x <= y then x :: y :: ys else y :: w)
\end{lstlisting}
and showing that it has a linear running time in the
size of its input.
Translating directly yields
\begin{multline*}
\trans{\lstinline!ins!} = \lambda_*x,xs.\lstinline!pfold $\;xs_p\;$ of ($(2+x_c, 1)$,! \\
[p, ps, w]\dally(4+x_c, (4+x_c, 2+ps)\bmax(2+w_c,1+w_p))).
\end{multline*}
If we write $f_{\ins}(x, xs)$ for $\trans{\lstinline!ins!}*x*xs$,
apply the equations for the denotations of $\lambda_*$ and $*$, and use
properties of the order on natural numbers, we have
\begin{align*}
f_{\ins}(x, xs)
  &= \dally(4+x_c+xs_c, \lstinline!pfold $\;xs_p\;$ of ($(3, 1)$,! \\
  &\qquad [p, ps, w] \dally(5, (5,2+ps)\bmax(2+w_c,1+w_p)))) \\
  &\leq \dally(4+x_c+xs_c, \lstinline!pfold $\;xs_p\;$ of ($(3, 1)$,! \\
  &\qquad [p, ps, w] (10+w_c, (2+ps)\bmax(1+w_p)))
\end{align*}
To turn this into a recognizable form,
set 
\[
g(z) = \lstinline!pfold!~z~\lstinline!of!~\bigr((3, 1),
  [p, ps, w](10+w_c, (2+ps)\bmax(1+w_p))\bigl)
\]
so that $f_{\ins}(x, xs) = \dally(4+x_c+xs_c,g(xs_p))$.
Rewriting $g$ as a pair of recurrences (one each for cost and potential)
we obtain
\begin{align*}
g_c(0) &= 3 & g_c(q+1) &= 13+g_c(q) \\
g_p(0) &= 1 & g_p(q+1) &= (2+q)\bmax(1+g_p(q))
\end{align*}
Here again we have used the equations from the denotational
semantics of \lstinline!pfold!.  For example, the expression
for $g_c(q+1)$ is $2+\rec_c+t_c$ where $\rec$ is the 
recursive call (i.e., $g(q)$) and 
\[
t = (10+w_c,(2+ps)\bmax(1+w_p))[w\mapsto (1, \rec_p)] =
(11, (2+ps)\bmax(1+\rec_p)).
\]
A straightforward induction establishes
$g(z) \leq (13z+3, z+1)$ and hence
$f_{\ins}(x, xs) \leq (13xs_p+7+x_c+xs_c, xs_p+1)$.

Continuing, we now consider the insertion-sort function
defined by
\begin{lstlisting}[basicstyle=\small]
ins_sort = $\lambda$xs.fold xs of (nil, [y, ys, w](insert y w))
\end{lstlisting}
Following the approach above and writing
$f_{\isort}(xs)$ for $\trans{\lstinline!ins_sort!}*xs$, we have
\begin{gather*}
g(z) = \pfoldexp z {(1, 0)} p {ps} w {f_{\isort}((1, p), w)} \\
f_{\isort}(xs) = \dally(2+xs_c, g(xs_p)) \\
\begin{align*}
g_c(0) &= 1 & g_c(q+1) &= 2+g_c(q) + \bigl(f_{\ins}\bigl((1, 1), (1, g_p(q))\bigr)\bigr)_c \\
g_p(0) &= 0 & g_p(q+1) &= \bigl(f_{\ins}\bigl((1, 1), (1, g_p(q))\bigr)\bigr)_p
\end{align*}
\end{gather*}
Using our bound on $f_{\ins}$, a proof by induction to show
that $g_p(q)\leq q$, and then this last inequality to simplify
the bound on $g_c(q+1)$ we have
\begin{align*}
g_c(0) &\leq 1 & g_c(q+1) &\leq 11+g_c(q)+13q \\
g_p(0) &\leq 0 & g_p(q+1) &\leq 1+g_p(q)
\end{align*}
from which we conclude
$g(z)\leq (13z^2+11z+1,z)$ and hence
$f_{\isort}(xs) \leq (13xs_p^2 + 9xs_p + 3 + xs_c, xs_p)$.

Although we don't often think of the analysis of insertion-sort
involving an analysis
of size (something we are forced to do when working with complexities),
in fact such size analyses are usually implicit.  For example, in the
standard analysis of insertion-sort, we usually implicitly make use
of the correctness of the algorithm to assert that the length of
$\lstinline!ins_sort!(xs)$ is the same as the length of~$xs$.

It is also worth noting that the analysis of $\trans{\lstinline!ins_sort!}$
only depends on properties of $\trans{\lstinline!insert!}$, namely
that $f_{\ins}(x, xs)\leq (a\cdot xs_p+x_c+xs_c+b, xs_p+1)$.
One way to see this is to note that we could have defined
a general fold function
\begin{lstlisting}[basicstyle=\small]
list_fold = $\lambda$f, xs, a.fold xs of (a, [y, ys, w](f y w))
\end{lstlisting}
and analyzed $\trans{\lstinline!list_fold!}$.  The analysis would be
in terms of $f_c$ and $f_p$.
We could then ``plug in'' assumptions about $f_c$ and $f_p$ to analyze
concrete instances of the general fold function such as insertion-sort.
Such assumptions and analyses could be exact or asymptotic, as the application
demands.
One can certainly imagine that such analyses are likely to
be important when reasoning about large modular programs.

\subsubsection{Map}
\label{sec:map}

Perhaps surprisingly, the analysis of the higher-order map function
defined by
\begin{lstlisting}[basicstyle=\small]
map = $\lambda$h.$\lambda$xs.fold xs of (nil, [y, ys, w](h(y) :: w))
\end{lstlisting}
is more straightforward than for insertion-sort.  Again following the
approach above and writing
$f_\map(h, xs)$ for $\trans{\lstinline!map!}*h*xs$, we have
\begin{gather*}
g(z) = \pfoldexp z {(1, 0)} p {ps} w {(4+(h_p(p))_c, 1+w_p)} \\
f_\map(h, xs) = \dally(4+h_c+xs_c, g(xs_p)) \\
\begin{align*}
g_c(0) &= 1 & g_c(q+1) &= 7+(h_p(1))_c + g_c(q) \\
g_p(0) &= 0 & g_p(q+1) &= 1+g_p(q)
\end{align*}
\end{gather*}
The cost of applying $h$ to any element of $xs$ is fixed, because
the size (potential) of every integer is~$1$; call this cost~$C$.  
We conclude
that $g(z) \leq ((7+C)z+1, z)$ and hence
$f_\map(h, xs) \leq ((7+C)xs_p+5+h_c+xs_c, xs_p)$.
}

\section{Soundness of the translation}
\label{sec:soundness}

We saw in our examples that the translations of the target-language
programs yield expressions that, modulo some manipulations, describe
the expected
upper bounds on the complexities (and hence evaluation costs) of the
original programs.  One might worry that the manipulations themselves
are the source of success, rather than the translation.
Our main goal is to show that (in an appropriate sense),
$\cost(t) \leq \cost(\den{\trans t}{})$ for all programs.
The challenge in doing so is that cost (and potential) is not
compositional:  $\cost(r\,s)$ is not defined solely in terms of the
cost of subexpressions of~$r$ and~$s$.  To get around this,
we define
a relation between target language closures and complexities,
which we then generalize to target and complexity
expressions.  The relation itself is essentially a logical
relation \citep[Ch.~8]{mitchell:foundations}
that allows us to prove our main Soundness result by induction
on terms.  With this in mind,
we start by defining
the \emph{bounding relation}~$\bounded$ as follows.  Let
$t$ be a target-language expression, $\xi$ a value environment defined
on the free variables of~$t$,
$\sigma$ be a target language type, and
$\chi$ be a complexity such that
$\chi\in\den{\trans\sigma}{}$.  We define
$\cl t\xi\bounded_\sigma\chi$ if $\eval{\cl t\xi}{\cl v\theta}$ implies
\begin{itemize}
\item $\cost(\cl t\xi)\leq\cost(\chi)$; and
\item $\cl v\theta\vbounded_\sigma\pot(\chi)$.
\end{itemize}
The \emph{value bounding relation}~$\vbounded$ relates values to potentials:
\begin{itemize}
\item $\cl*{\TT}\theta\vbounded_{\lstinline!bool!} 1$, 
$\cl*{\FF}\theta\vbounded_{\lstinline!bool!} 1$,
$\cl n\theta\vbounded_{\lstinline!int!} 1$.
\item $\cl{(n_0,\dots,n_{k-1})}\theta\vbounded_{\lstinline!int*!} p$ 
if $k\leq p$.
\item $\cl*{\lambda x.r}\theta\vbounded_{\sigma\arrow\tau} p$ 
if whenever $\cl z\eta\vbounded_\sigma q$,
then $\cl r{\extend\theta x{\cl z\eta}}\bounded_\tau p(q)$.
\end{itemize}
We will usually drop the type-subscript from $\bounded$ and $\vbounded$.
If $\xi$ and $\xi^*$ are value- and complexity- environments respectively,
we write $\xi\bounded\xi^*$ to mean that
$\cl x\xi\bounded_\sigma\xi^*(x)$ 
whenever $x\in\dom\xi^*$ and $\xi^*(x)\in\den{\trans \sigma}{}$.
If $t$ is a target expression and $\typing[\Gamma^*]{t^*}{\tau}$ 
a complexity term,
then we write $t\bounded \typing[\Gamma^*]{t^*}{\tau}$ to mean that 
$\cl t\xi\bounded\den{t^*}{\xi^*}$
whenever $\xi^*$ is a $\Gamma^*$-consistent environment and $\xi\bounded\xi^*$.

Our main theorem is the following:

\begin{thm}[Soundness]
If $\typingG t\tau$, then $t\bounded\trans {\typingG t \tau}$.
\end{thm}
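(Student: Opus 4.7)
The plan is to proceed by induction on the typing derivation $\typingG{t}{\tau}$, peeling off the universal quantifiers over $\xi^*$ and $\xi\bounded\xi^*$ at the outset of each case and then unfolding $\bounded$. Concretely, after assuming $\eval{\cl{t}{\xi}}{\cl{v}{\theta}}$, one must bound $\cost(\cl{t}{\xi})$ by the cost component of $\den{\trans{t}}{\xi^*}$ and show that $\cl{v}{\theta}$ is $\vbounded$ the potential component. The denotational semantics in Figure~\ref{fig:cpxy_semantics} is deliberately engineered to prefigure the shape of the evaluation derivation, so for most constructors the argument is a direct calculation: variables use $\xi\bounded\xi^*$; constants, $\nilexp$, relations, arithmetic, and cons compute their derivation sizes and list lengths in one or two steps from the IH; and conditionals exploit the fact that only one branch fires, whose cost is bounded by $\trans{s}\bmax\trans{t}$.

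The abstraction and application cases together exercise the logical-relation character of $\vbounded$. For $\lambda x.r$ the evaluation has cost $1$, matching the leading $1$ in $\den{\lambda_* x.r}{}$, and value bounding at arrow type reduces via IH on $r$ to checking that extending $\xi$ with a value $\cl{z}{\eta}\vbounded q$ and $\xi^*$ with $(1,q)$ preserves the environment relation---which holds because a value evaluates to itself in one step of cost $1$. Application $r\,s$ then combines the IHs on $r$ and $s$ to bound the first two premises of the target application rule and invokes the function-type clause of $\vbounded$ on $\den{\trans{r}_p}{\xi^*}(\den{\trans{s}_p}{\xi^*})$ for the third, matching exactly the $\dally(1+\trans{r}_c+\trans{s}_c,\cdots)$ shape produced by $\trans{r}*\trans{s}$.

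The main obstacle is the \lstinline!fold! case, which requires an inner induction on the length of the list value $ns$ that $r$ evaluates to. The empty case reduces to the IH on the base branch $s$, with $\dally(1+\trans{r}_c,\cdot)$ covering the cost of evaluating $r$ and the conclusion step. For the step case, I must simultaneously unfold two semantics: the target rule, which binds a fresh $y$ to the tail $ns$ and recursively folds over $y$, and the \lstinline!pfold! denotation, which charges $2$ and substitutes $(1,\rec_p)$ for the accumulator $w$. The inner IH on the recursive fold supplies both the cost bound from $\den{\rec_c}{}$ and the value bound $\cl{v_0}{\theta_0}\vbounded\den{\rec_p}{}$; the unit cost inside $(1,\rec_p)$ absorbs the one-step lookup of $y$ inside the recursive evaluation and the lookup of $w$ inside the body. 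Applying IH on the body $t$ then discharges its evaluation, once one verifies that $\extend{\xi}{x,xs,w}{n,ns,\cl{v_0}{\theta_0}}\bounded\extend{\xi^*}{p,ps,w}{1,q,(1,\rec_p)}$, with $|ns|\leq q$ descending from the outer $\vbounded$ hypothesis on the list. The $\bmax$ in the potential clause of \lstinline!pfold! is essential here: because the potential bounds length from above rather than exactly, the base branch may intuitively ``fire'' at an intermediate recursion depth, and only the maximum safely dominates both branches.

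The \lstinline!case! construct is a strict simplification of this argument without the accumulator. Throughout, several routine monotonicity lemmas support the calculations---that $\bounded$ is downward-closed in cost and upward-closed in potential, that $\vbounded$ for lists is upward-closed in its potential, and that $\dally$ interacts well with $+$ and $\bmax$---all of which are the kind of bookkeeping suitable for the Coq formalization outlined in Section~\ref{sec:implementation}.
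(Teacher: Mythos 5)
Your proposal is correct and follows essentially the same route as the paper: induction on the typing derivation, direct calculation for the first-order constructors, the logical-relation argument for abstraction/application via an environment-extension lemma, and an inner induction on the list value for \lstinline!fold!, with the $\bmax$ in the \lstinline!pfold! potential justified exactly as you say. The only organizational difference is that the paper factors the inner induction on the recursive \lstinline!fold! over the fresh variable~$y$ into a separate lemma (Lemma~\ref{lem:fold_rec}) and handles the $x,xs$ versus $p,ps$ variable mismatch explicitly via a substitution lemma (Lemma~\ref{lem:substitution}), whereas you inline both as bookkeeping.
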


\begin{cor}
If $\typingE t\tau$, then $\cost(\cl t\emptyenv)\leq\cost(\trans{\typingE t\tau})$.
\end{cor}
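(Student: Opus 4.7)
The corollary is an immediate specialization of the Soundness Theorem to the empty typing context: once I have $\cl t\emptyenv\bounded\den{\trans t}{\emptyenv}$ from the theorem (the empty environment is trivially consistent with the empty context, and $\emptyenv\bounded\emptyenv$ holds vacuously), the cost clause of $\bounded$ gives $\cost(\cl t\emptyenv)\leq\cost(\den{\trans t}{\emptyenv})$ whenever $\cl t\emptyenv$ evaluates. So the real task is the Soundness Theorem, which I would prove by structural induction on the typing derivation $\typingG t\tau$; because $\vbounded$ is a logical relation (defined by induction on types with a clause at arrow types that quantifies over all bounded argument closures), the statement $t\bounded\trans{\typingG t\tau}$, uniformly quantifying over all $\xi,\xi^*$ with $\xi\bounded\xi^*$, is the correct inductive invariant.

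For the first-order cases the argument is mechanical. Constants and $\nilexp$ evaluate in a single inference and translate to $(1,1)$ or $(1,0)$, matching their costs and potentials; variables reduce directly to the $\xi\bounded\xi^*$ assumption. For $r\mathbin R s$, $r::s$, and $\ifexp r s t$, I apply the IH to each subexpression and verify that the explicit additive constants in the translation cover exactly the extra inference rules charged by the operational semantics; in the conditional case this uses that $\den{\trans s\bmax\trans t}{\xi^*}$ dominates whichever branch is actually evaluated. For abstraction $\lambda x.r$, the obligation unfolds to the logical-relation clause at arrow types: given any $\cl z\eta\vbounded q$, I must show $\cl r{\extend\xi x{\cl z\eta}}\bounded\den{\trans r}{\extend{\xi^*}x{(1,q)}}$, which is immediate from the IH on $r$ once one checks that the extended environments remain $\bounded$-related. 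The application $rs$ case then combines the IH on $r$ and $s$ with this arrow-type clause; the three cost contributions (evaluating the function, evaluating the argument, evaluating the body) are precisely what the semantics of $*$ charges via $\dally(1+\trans r_c+\trans s_c,\ldots)$.

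The main obstacle is the \lstinline!fold! case, which demands an auxiliary induction on the length of the list produced by the scrutinee. Suppose $\eval{\cl r\xi}{\cl{(n_0,\dots,n_{k-1})}{\emptyenv}}$; I would proceed by induction on $k$. The $k=0$ case uses the $\den{\trans r_p}{\xi^*}=0$ branch of the \lstinline!pfold! semantics together with the IH on $s$. For $k+1$, I apply the IH on the body $t$ under environments extending $\xi$ with $x,xs,w$ bound to $n_0$, the tail $ns$, and the value of the recursive subderivation, and extending $\xi^*$ exactly as prescribed by the \lstinline!pfold! clause; then I invoke the inner induction on $k$ to bound the recursive subderivation. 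Care is needed in two places: aligning the fresh-$y$ binding used in the target \lstinline!fold! rule with the analogous recursive call baked into the complexity semantics (this is what justifies the $+2$ offset rather than a recomputation cost on subsequent passes), and verifying that the $\bmax$ with $\trans s_p$ in the potential clause is what rescues upper-bound soundness when a later iteration might produce a smaller potential than the base branch.

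The \lstinline!case! case is a strictly simpler version of the same argument with no recursive call to track; the $\bmax$ with the base branch ensures correctness when $\trans r_p$ overestimates the actual list length, exactly the motivation discussed in Section~\ref{sec:complexity_lang}. Throughout, the remaining arithmetic reduces to monotonicity of $+$ and $\bmax$ on $\N$ and projection identities for $\dally$, which are the kinds of routine lemmas that the Coq formalization from Section~\ref{sec:implementation} should handle uniformly.
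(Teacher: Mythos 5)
Your proposal is correct and follows essentially the same route as the paper: the corollary is an immediate specialization of the Soundness Theorem to the empty context and empty environments, and your sketch of the theorem itself---structural induction on the typing derivation over the logical relation $\bounded$/$\vbounded$, with an inner induction on the length of the list for the \lstinline!fold! case and the $\bmax$ with the base branch handling potential overestimation---mirrors the paper's proof and its supporting lemmas (environment extension, the max lemma, and the auxiliary \lstinline!fold!/\lstinline!pfold! recursion lemmas).
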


The reader may be concerned by the seeming lack of a connection
between any putative value environment $\xi$ under which $t$ may
be evaluated and $\Gamma$; in particular, there is no connection between
$\xi(x)$ and $\Gamma(x)$ for $x\in\fv(t)$.  
A short response is that the Soundness Theorem is typically applied only to
closed terms (as in the Corollary) 
and hence there is no concern, and the reader is
free to accept this and ignore the remainder of this paragraph.
A longer response is that this observation has to do with the
fact that our evaluation semantics places no restrictions upon
the value environment to enforce reasonable ``typing.''  
Indeed, no restrictions can be placed, as we have not defined a notion
of typing for values, although defining such a notion is not
technically difficult.
However, in applying the Soundness Theorem and chasing the definitions,
one is forced to apply it to value environments that assign ``reasonable''
values to the variables.  For suppose that 
$\typingG t\tau$; $t\bounded\trans{\typingG t \tau}$;
$\xi$ is a value environment such that
$\eval{\cl t\xi}{\cl v\theta}$; and that $x\in\fv(t)$.  
Since $x\in\fv(t)$, $x\in\dom\Gamma$.
Take
any $\trans\Gamma$-consistent environment~$\xi^*$ such that
$\xi\bounded\xi^*$.  Suppose that $\xi^*(x)\in\den{\trans\sigma}{}$;
by the definition of consistency, $\trans\Gamma(x)=\trans\sigma$ and
hence $\Gamma(x)=\sigma$.  But in addition, since
$\cl x\xi\bounded_\sigma\xi^*(x)$ and $\eval{\cl x\xi}{\xi(x)}$,
$\xi(x)\vbounded_\sigma\xi^*(x)$.  Examining the definition of
$\vbounded_\sigma$, we see that this ensures that
the ``type'' of $\xi(x)$ (had we defined it) must be
$\sigma$.

Before proving the Soundness Theorem, we establish some preliminary
lemmas.  The first three are consequences of definitions
and syntactic manipulation.

{
\def\size#1{\cost(#1)}
\def\clos#1#2{\cl{#1}{#2}}
\let\ble\bounded
\let\Empty\emptyenv
\def\judge#1#2#3#4{\eval{\cl{#1}{#2}}{\cl{#3}{#4}}}
\def\pjudge#1#2#3#4{\eval{\cl*{#1}{#2}}{\cl{#3}{#4}}}
\def\bpjudge#1#2#3#4{\cl*{#1}{#2}\bounded{#4}}
\def\bjudge#1#2#3#4{\cl{#1}{#2}\bounded{#4}}
\def\pbjudge#1#2#3#4{\cl{#1}{#2}\vbounded{#4}}
\def\bigdally#1#2{\dally(#1,#2)}

\begin{lemma}
\label{lem:boundmax}
If $\cl t {\xi} \bounded_{\tau} \chi$ then for all 
$\chi' \in \den{\trans{\tau}}{}$, $\bjudge t {\xi} {\tau} {\chi \vee \chi'}$.
\end{lemma}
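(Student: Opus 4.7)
The plan is to reduce the claim to a parallel monotonicity property for $\vbounded$ and then prove both by a single induction on the type $\tau$. Throughout I would read $\chi \vee \chi'$ as the componentwise maximum of complexities, $\bmax$ on $\N$, and pointwise maximum on function potentials, so that in particular $(p \vee p')(q) = p(q) \vee p'(q)$.

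First I would unpack the goal. Assume $\eval{\cl t\xi}{\cl v\theta}$. Because $\cost(\cl t\xi) \leq \cost(\chi) \leq \cost(\chi) \bmax \cost(\chi')$, the cost component of the bound is immediate from the hypothesis and the definition of $\vee$ on complexities. All the remaining work is to show that the value still bounds the larger potential: $\cl v\theta \vbounded_\tau \pot(\chi) \vee \pot(\chi')$.

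This I would obtain from a monotonicity lemma for $\vbounded$: if $\cl v\theta \vbounded_\tau p$ and $p' \in \den{\typot\tau}{}$, then $\cl v\theta \vbounded_\tau p \vee p'$. I would prove this by induction on $\tau$, carried out simultaneously with the main statement so that the two halves can feed each other at function types. For $\lstinline!bool!$ and $\lstinline!int!$, the relation is satisfied at any potential that is at least $1$, so $p \vee p' \geq p = 1$ suffices; for $\lstinline!int*!$, if $k \leq p$ then $k \leq p \bmax p'$. For $\tau = \sigma \arrow \tau'$, suppose $\cl{\lambda x.r}\theta \vbounded_{\sigma\arrow\tau'} p$ and let $\cl z\eta \vbounded_\sigma q$; by hypothesis $\cl r{\extend\theta x{\cl z\eta}} \bounded_{\tau'} p(q)$, and applying the main induction hypothesis at $\tau'$ with complexities $p(q)$ and $p'(q)$ yields $\cl r{\extend\theta x{\cl z\eta}} \bounded_{\tau'} p(q) \vee p'(q) = (p \vee p')(q)$, as required.

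The main obstacle is really bookkeeping: fixing the componentwise/pointwise meaning of $\vee$ at every level of $\trans\tau$, and reading the base clauses of $\vbounded$ as upward-closed in their potential argument. With those conventions in place the induction is essentially mechanical, and it is essential that the monotonicity lemma and the main lemma be proved together so that the arrow case of the former can invoke the latter at the strictly smaller result type.
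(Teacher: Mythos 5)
Your proposal is correct, and it supplies detail that the paper deliberately omits: Lemma~\ref{lem:boundmax} is one of the three lemmas the paper dismisses as ``consequences of definitions and syntactic manipulation,'' with no written proof. The structure you give---splitting off a companion monotonicity statement for $\vbounded$ and proving it simultaneously with the $\bounded$ statement by induction on $\tau$, so that the arrow case of $\vbounded$-monotonicity at $\sigma\arrow\tau'$ can call $\bounded$-monotonicity at the strictly smaller type $\tau'$---is exactly the right way to discharge the mutual recursion in the definitions, and mirrors how the Coq development must organize the argument. You also correctly isolate the one point on which the lemma's truth actually hinges: the clauses $\cl*{\TT}\theta\vbounded_{\lstinline!bool!}1$ and $\cl n\theta\vbounded_{\lstinline!int!}1$ must be read as upward-closed in the potential (as the $\lstinline!int*!$ clause $k\leq p$ explicitly is), and $\chi\vee\chi'$ must be componentwise on pairs and pointwise on function potentials; under the literal ``exactly $1$'' reading the base case would fail. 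Since the paper elsewhere takes maxima of complexities at all types and relies on this lemma in the \lstinline!pcase!/\lstinline!pfold! and conditional cases of the Soundness Theorem, your reading is the intended one, and with it your induction goes through.
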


\begin{lemma}
\label{lem:env_ext}
If $\xi\bounded\xi^*$ and $\cl v\theta\vbounded q$, then
$\extend\xi x {\cl v\theta}\bounded\extend{\xi^*}{x}{(1, q)}$.
\end{lemma}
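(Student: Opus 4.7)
The plan is to unfold the definition of $\bounded$ for environments, which requires verifying $\cl y{\extend\xi x{\cl v\theta}}\bounded_\sigma(\extend{\xi^*}{x}{(1,q)})(y)$ for every $y$ in the domain of the extended complexity environment with appropriate typing. This splits naturally into two cases according to whether $y$ is the freshly bound variable or one of the pre-existing ones.

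First I would handle the case $y = x$. Here $(\extend{\xi^*}{x}{(1,q)})(x) = (1, q)$ and $(\extend{\xi}{x}{\cl v\theta})(x) = \cl v\theta$. The only evaluation of $\cl x{\extend{\xi}{x}{\cl v\theta}}$ is via the variable axiom, so it yields $\cl v\theta$ with cost $1$. The cost inequality $1 \leq \cost((1,q)) = 1$ holds trivially, and the value-bounding obligation $\cl v\theta \vbounded_\sigma q$ is exactly the second hypothesis.

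Next I would handle the case $y \neq x$ with $(\extend{\xi^*}{x}{(1,q)})(y) \in \den{\trans\sigma}{}$. Here the extended complexity environment agrees with $\xi^*$ on $y$, so the first hypothesis $\xi \bounded \xi^*$ gives $\cl y\xi \bounded_\sigma \xi^*(y)$. Since $y \neq x$, the variable-axiom evaluation of $y$ under $\extend{\xi}{x}{\cl v\theta}$ produces the same value $\xi(y)$ at the same unit cost as its evaluation under $\xi$, so the cost bound and value-bounding condition transfer immediately.

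There is no real obstacle here; the lemma is essentially a bookkeeping statement about how the logical relation interacts with environment extension. The only subtlety is being careful about the typing side-condition in the definition of $\xi \bounded \xi^*$ (that it only ranges over variables whose complexity-environment image lies in the denotation of some translated type), but since $(1,q) \in \den{\trans\sigma}{}$ holds exactly when $q \in \den{\typot\sigma}{}$, which is implicit in the well-formedness of the statement $\cl v\theta \vbounded_\sigma q$, this causes no difficulty.
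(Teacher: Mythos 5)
Your proof is correct and matches what the paper intends: the authors give no explicit argument for this lemma, describing it only as a consequence of the definitions and syntactic manipulation, and your case split on whether the looked-up variable is the newly bound one is exactly that unfolding. The handling of the unit evaluation cost against $\cost((1,q))=1$ and the appeal to the hypothesis $\cl v\theta\vbounded_\sigma q$ for the potential obligation are precisely the intended bookkeeping.
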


\begin{lemma}
\label{lem:substitution}
$\den t{\extend\xi{x}{(a,b)}} = \den{\extend t x{(a,y)}]}{\extend\xi y b}$
where $y$ is a fresh variable.
\end{lemma}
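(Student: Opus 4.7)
The plan is to reduce this to two standard facts about the denotational semantics of the complexity language: a substitution lemma
\[
\den{\extend{t}{x}{e}}{\xi} = \den{t}{\extend{\xi}{x}{\den{e}{\xi}}}
\]
and a coincidence property ($\den{t}{\xi_1} = \den{t}{\xi_2}$ whenever $\xi_1$ and $\xi_2$ agree on $\fv(t)$). Given these, applying substitution with $e = (a,y)$ and environment $\extend{\xi}{y}{b}$ yields
\[
\den{\extend{t}{x}{(a,y)}}{\extend{\xi}{y}{b}} = \den{t}{\extend{\extend{\xi}{y}{b}}{x}{(a,b)}},
\]
and the freshness of $y$ (in particular $y \notin \fv(t)$) combined with coincidence reduces the right-hand side to $\den{t}{\extend{\xi}{x}{(a,b)}}$, as required.

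Both auxiliary lemmas are proved by structural induction on the complexity expression~$t$. The variable and constant cases are immediate: looking up $x$ in $\extend{\xi}{x}{(a,b)}$ yields the same value $(a,b)$ that results from denoting $(a,y)$ under $\extend{\xi}{y}{b}$, and freshness of $y$ guarantees that any other variable is unaffected by either extension. The compound non-binding cases ($r+s$, $r\bmax s$, $(r,s)$, $r_c$, $r_p$, $r*s$) decompose according to the semantic clauses in Figure~\ref{fig:cpxy_semantics} and follow directly from the inductive hypotheses applied to subterms. The binding cases ($\lambda_* z.r$, pcase, pfold) are handled by first $\alpha$-converting the bound variables to be distinct from $x$ and $y$, after which the environment extensions commute and the inductive hypothesis applies to the body.

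The main obstacle, though not a serious one, will be the pfold case, since its semantic clause is itself meta-recursive via an expression $\rec = \pfoldexp{y'}{s}{p}{ps}{w}{t_0}$ that reintroduces pfold with a newly chosen internal fresh variable. Completing the structural induction here requires an inner induction on the value $q = \den{r}{\xi}$ of the recursion argument, with $y'$ chosen distinct from all other relevant names (in particular from $x$ and the lemma's~$y$). At each unfolding step, the inductive hypothesis applies to the strictly smaller recursive argument, and the freshness choices guarantee that the environment extensions $\xi_0$ and $\xi_1$ required by the pfold clause commute cleanly with both the substitution $\extend{\cdot}{x}{(a,y)}$ and the extension by $y \mapsto b$.
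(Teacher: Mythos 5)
Your proof is correct; the paper offers no argument for this lemma at all (it is simply listed among the lemmas that are ``consequences of definitions and syntactic manipulation''), and your decomposition into a standard substitution lemma plus a coincidence lemma, each by structural induction, is the natural way to discharge it. You also correctly identify the one genuinely delicate point, namely that the \lstinline!pfold! case needs an inner induction on the value of the recursion argument because its semantic clause unfolds to a non-structural subterm.
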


The next two lemmas establish bounds related to recursive definitions.

\begin{lemma}
\label{lem:cost_pfold_rec}
For all complexity expressions $r$,
\[
\size{\den{s}\xi^{\ast}} \le \cost(\den{\pfoldexp r s p {ps} w t}\xi^{\ast}).
\]
\end{lemma}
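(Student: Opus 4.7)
The plan is to prove the inequality by induction on the natural number $n = \den{r}{\xi^{\ast}}$. Note that $\cost(\chi)$ here means the first (cost) projection of the complexity $\chi$, so the claim reduces to comparing two first projections computed via the $\lstinline!pfold!$ clause of Figure~\ref{fig:cpxy_semantics}.

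For the base case $n = 0$, the semantics of $\lstinline!pfold!$ gives $\den{\pfoldexp r s p {ps} w t}{\xi^{\ast}} = \den{s}{\xi^{\ast}}$ outright, so the two costs are equal and the inequality holds trivially.

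For the inductive step $n = q+1$, the $\lstinline!pfold!$ clause yields
\[
\cost\bigl(\den{\pfoldexp r s p {ps} w t}{\xi^{\ast}}\bigr) = 2 + \cost(\den{\rec}{\xi_0}) + \cost(\den{t}{\xi_1}),
\]
where $\rec = \pfoldexp y s p {ps} w t$, $\xi_0 = \extend{\xi^{\ast}}{y}{q}$, and $\xi_1$ is as in the figure. Since $\den{y}{\xi_0} = q < n$, the induction hypothesis applies to $\rec$ and gives $\cost(\den{s}{\xi_0}) \le \cost(\den{\rec}{\xi_0})$. Because $y$ was chosen fresh, it is not free in $s$, so $\den{s}{\xi_0} = \den{s}{\xi^{\ast}}$ (formally by a standard coincidence lemma for the complexity semantics). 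Since $\cost(\den{t}{\xi_1}) \ge 0$, we conclude
\[
\cost(\den{s}{\xi^{\ast}}) = \cost(\den{s}{\xi_0}) \le \cost(\den{\rec}{\xi_0}) \le \cost\bigl(\den{\pfoldexp r s p {ps} w t}{\xi^{\ast}}\bigr),
\]
as desired.

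The only nonroutine point is recognizing that the induction must be on the numerical value $\den{r}{\xi^{\ast}}$ (which is what the $\lstinline!pfold!$ semantics branches on) rather than on the syntactic structure of $r$; everything else is direct unfolding of the definition and an appeal to the freshness of $y$ to equate $\den{s}{\xi_0}$ with $\den{s}{\xi^{\ast}}$.
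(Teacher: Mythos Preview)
Your proof is correct and follows essentially the same approach as the paper's: induction on the natural number $\den{r}{\xi^\ast}$, with the base case immediate and the inductive step appealing to the recursive $\lstinline!pfold!$ under the extended environment $\xi_0$. Your explicit mention of the freshness of $y$ to justify $\den{s}{\xi_0} = \den{s}{\xi^\ast}$ is a detail the paper leaves implicit, but it is indeed needed for the argument to go through.
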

\begin{proof}
Formally we prove by induction on~$q$ that for all complexity
expressions~$r$ and environments~$\xi^*$, if 
$\den r\xi^* = q$, then
\[
\size{\den{s}\xi^{\ast}} \le \cost(\den{\pfoldexp r s p {ps} w t}\xi^{\ast}).
\]
If $q=0$, then the two sides of the inequality are in fact equal.
Suppose $\den r\xi^* = q+1$.  Then
\[
\cost(\den{\pfoldexp r s p {ps} w t}\xi^{\ast}) =
2+\cost(\rec) + \cost(\den t{\extend {\xi^*}{p,ps,w}{1,q,(1,\pot(\rec))}})
\]
where 
$\rec = \den{\pfoldexp y s p {ps} w t}{\extend{\xi^*}{y}q}$.  By the
induction hypothesis, $\cost(\den s\xi^*)\leq\cost(\rec)$, and so the
claim follows.
\end{proof}

\begin{lemma}
\label{lem:fold_rec}
Suppose $s \ble \trans s$ and $t \ble \trans t$. Fix $\xi \ble \xi^{\ast}$
and let $\xi_0 = \extend\xi y {(n_0,\ldots,n_{k-1})}$ and
$\xi_0^{\ast} = \extend{\xi^{\ast}}y q$, where $k \le q$.
Assume $y$ is not free in either $s$ or $t$. Then 
\[
\cl{\foldexp y s x {xs} w t}{\xi_0} \bounded
    \bigdally{2}{\den{\pfoldexp y {\trans s} p {ps} w {t^\prime}}\xi_0^{\ast}}
\]
where $t' = \extend{\trans t}{x,xs}{(1,p), (1,ps)}$.
\end{lemma}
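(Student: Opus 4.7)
The plan is to prove the lemma by induction on $k$, the length of the list bound to $y$ in $\xi_0$. Unpacking the bounding relation, I must show that if the target fold evaluates to some $\cl v\theta$, then its cost is at most that of $\dally(2, \den{\pfoldexp y {\trans s} p {ps} w {t'}}{\xi_0^*})$ and $\cl v\theta$ is related by $\vbounded$ to its potential. Throughout, I rely on the hypothesis that $y\notin\fv(s)\cup\fv(t)$ (and hence $y\notin\fv(\trans s)\cup\fv(\trans t)$), so that the bindings of $y$ in $\xi_0$ and $\xi_0^*$ are irrelevant when applying $s\bounded\trans s$ or $t\bounded\trans t$.

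In the base case $k=0$, the list at $y$ is empty, so the target fold reduces via the empty-list rule: it evaluates exactly when $s$ does, with total cost $2+\cost(\cl s{\xi_0})$. From $s\bounded\trans s$ and $\xi\bounded\xi^*$ I get $\cost(\cl s{\xi_0})\le\cost(\den{\trans s}{\xi_0^*})$ and $\cl v\theta\vbounded\pot(\den{\trans s}{\xi_0^*})$. If $q=0$, the complexity denotation is exactly $\den{\trans s}{\xi_0^*}$ and the claim follows after accounting for the outer $\dally(2,\cdot)$. If $q>0$, the denotation also takes a max with the recursive branch; I invoke Lemma~\ref{lem:cost_pfold_rec} to dominate $\cost(\den{\trans s}{\xi_0^*})$ by the cost of the whole pfold and Lemma~\ref{lem:boundmax} to lift the potential bound to the max.

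For the inductive step, $k=k'+1$ and $\xi_0(y)=(n_0,ns)$ with $|ns|=k'$; since $k\le q$ we also have $q=q'+1$ with $k'\le q'$. The target rule introduces a fresh variable $y'$, binds it to $ns$, evaluates the sub-fold $\foldexp{y'}{s}{x}{xs}{w}{t}$ to some $\cl{v_0}{\theta_0}$, and then evaluates $t$ in $\xi_1=\extend{\xi_0}{x,xs,w}{n_0,ns,\cl{v_0}{\theta_0}}$. Applying the IH to this sub-fold (its list $ns$ has length $k'\le q'$) yields a bound in terms of the complexity pfold at $\extend{\xi^*}{y'}{q'}$; by alpha-renaming (made formal via Lemma~\ref{lem:substitution}) that denotation matches the recursive call $R$ appearing in the semantic clause for $\pfoldexp y{\trans s}p{ps}w{t'}$ at $\xi_0^*$. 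Extending the bounding relation by the new bindings $(x,(1,1))$, $(xs,(1,q'))$, and $(w,(1,\pot(R)))$ via Lemma~\ref{lem:env_ext}, then applying $t\bounded\trans t$, and using Lemma~\ref{lem:substitution} to identify $\den{\trans t}$ on the extended environment with $\den{t'}{\xi_1^*}$, bounds $\cost(\cl t{\xi_1})$ and relates the produced value to $\pot(\den{t'}{\xi_1^*})$. Summing these with the $+2$ from the target fold inference rule and the $2+\cost(R)$ from the sub-fold bound reproduces the $\dally(2,\cdot)$ on the right-hand side.

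The main obstacle I anticipate is bookkeeping: ensuring that the sub-fold variable $y'$ freshly introduced by the target rule properly corresponds by renaming to $y$ in the complexity pfold, and that the three target-side environment extensions line up with the complexity-side extensions $(1,1)$, $(1,q')$, and $(1,\pot(R))$. The first two matches come from the fact that integers have potential $1$ and that $k'\le q'$ implies $ns\vbounded q'$; the third is precisely the content of the IH on the sub-fold's value, so the induction step closes cleanly once the IH is in hand.
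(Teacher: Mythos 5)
Your proposal is correct and follows essentially the same route as the paper's proof: induction on $k$, with the base case handled via Lemma~\ref{lem:cost_pfold_rec} and Lemma~\ref{lem:boundmax}, and the inductive step combining the induction hypothesis on the sub-fold with Lemmas~\ref{lem:env_ext}, \ref{lem:substitution}, and \ref{lem:boundmax}. The only difference is cosmetic: you make explicit the $\alpha$-renaming of the fresh recursion variable $y'$, which the paper silently handles by re-binding $y$ itself (licensed by the hypothesis that $y$ is not free in $s$ or $t$).
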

\begin{proof}
We prove the claim by induction on $k$. If $k=0$ then the cost of the
fold expression is $2 + \size{\clos s {\xi_0}}$, and the value of the fold 
expression under $\xi_0$ is the value to which $s \xi_0$ evaluates. 
The cost bound is proved by Lemma \ref{lem:cost_pfold_rec}:
\[
\size{\clos{\foldexp y s x {xs} w t}{\xi_0}} \le 2 + (\trans s\xi_0^{\ast})_c \le
2 + (\den{\pfoldexp y {\trans s} p {ps} w {t'}}\xi_0^{\ast})_c
\]
By the IH
$\pbjudge v {\theta} {\tau} {(\trans s\xi_0^{\ast})_p}$.
The potential of $\bigdally 2 {\den{\pfoldexp y {\trans s} p {ps} w {t'}}\xi_0^{\ast}}$
is either $(\trans s\xi_0^{\ast})_p$ or $(\trans s\xi_0^{\ast})_p \vee (\den{t'}\xi_0^{\ast\ast})_p$
for some environment $\xi_0^{\ast\ast}$, we know from Lemma \ref{lem:boundmax} that
the potential bound holds.

Suppose $\xi_0(y) = (n,ns)$ with $ns = (n_0,\ldots,n_{k-1})$. Then 
$\xi_0^{\ast}(y) = q+1$ for some $q \ge k$. 
By the induction hypothesis we know that
\[
\cl{\foldexp y s x {xs} w t} {\extend\xi y {ns}}\bounded
\bigdally 2 \rec
\]
where $\rec = \den{\pfoldexp y {\trans s} p {ps} w {t'}}{\extend{\xi^{\ast}}y q}.$
So if 
\[\pjudge {\foldexp y s x {xs} w t} {\extend\xi y {ns}} {v'} {\theta'}\]
then $\pbjudge {v'} {\theta'} {\tau} {\rec_p}$, which means
\[
\extend\xi{x,xs,w}{n, ns, v'\theta'} \ble
	 \extend{\xi^{\ast}}{x, xs, w}{(1,1), (1,q), (1,\rec_p)}
\]
Let $\xi_1 = \extend{\xi_0}{x, xs, w}{n, ns, v'\theta'}$
and let $\xi_1^{\ast} = 
\extend{\xi_0^{\ast}}{p, ps, w}{1, q, (1,\rec_p)}$.
Since $y$ does not occur free in $t$ or $t'$, by 
Lemma \ref{lem:substitution}, 
$\bjudge t {\xi_1} {\tau} {\den{t'}\xi_1^{\ast}}$.
So 
\begin{align*}
\cost(\clos{\foldexp y s x {xs} w t}{\xi_0})
&= 2 + \size{\clos{\foldexp y s x {xs} w t}
	 {\extend\xi{y}{ns}}} + \size{\clos t {\xi_1}} \\
&\le 2 + (2 + \rec_c) + \left(\den{t'}\xi_1^{\ast}\right)_c \\
&= 2 + \left(\den{\pfoldexp y {\trans s} p {ps} w {t'}}\xi_0^{\ast}\right)_c
\end{align*}
For the potential bound, if $\pjudge {\foldexp y s x {xs} w t} {\xi_0} v {\theta}$,
then $\judge t {\xi_1} v {\theta}$, so $\pbjudge v {\theta} {\tau} {\pot(\den{t'}\xi_1^{\ast})}$.
By Lemma \ref{lem:boundmax}, we have 
\[
\pbjudge v {\theta} {\tau} {(\trans s\xi_0^{\ast})_p \vee (\den{t'}\xi_1^{\ast})_p}  =
\pot\left(\den{\pfoldexp y {\trans s} p {ps} w {t'}}\xi_0^{\ast}\right).
\]
\end{proof}
}

\begin{proof}[Proof of Soundness Theorem]
By induction on the derivation of $\typingG t\tau$.  Most
of the proof is similar to the analogous proof in
\citep{danner-royer:two-algs}, so we do just a few cases here.
In this proof we will write $\trans s\xi^*$ for
$\den{\trans s}{\xi^*}$.

Suppose $\typingG {(r::s)}{\lstinline!int*!}$
and fix $\xi\bounded\xi^*$.  We have that
$\trans{r::s} = (1+\trans{r}_c+\trans{s}_c, 1+\trans{s}_p)$.
For the cost bound,
\[
\cost(\cl*{r::s}{\xi}) = 1+\cost(\cl r\xi)+\cost(\cl s\xi) \leq
1+\cost(\trans r\xi^*) + \cost(\trans s\xi^*) = \cost(\trans{r::s}\xi^*).
\]
For the potential bound, if $\eval{\cl*{r::s}{\xi}}{(n,n_0,\dots,n_{k-1})}$,
then $\eval {\cl s\xi}{(n_0,\dots,n_{k-1})}$, so by the induction
hypothesis, $\pot(\trans s\xi^*)=q$ for some~$q\geq k$; hence
$\pot(\trans{r::s}\xi^*)= 1+q\geq 1+k$.

Suppose $\typingG {\lambda x.r}{\sigma\arrow\tau}$, with
$\typing[\Gamma,x\mapsto\sigma] r\tau$, and fix $\xi\bounded\xi^*$.
Verifying the cost bound is trivial, so we focus on the potential bound.
Set $p = \trans{\lambda x.r}_p\xi^* =
\llambda v.\trans r\xi^*[x\mapsto (1, v)]$.
We must show that if
$\cl z\eta\vbounded_\sigma q$, then
$\cl r{\extend\xi x{\cl z\eta}}\bounded_\tau p(q) = 
{\trans r}\extend{\xi^*}x{(1, q)}$.
Since $\cl z\eta\vbounded q$, by Lemma~\ref{lem:env_ext}
$\extend\xi x{\cl z\eta}\bounded\extend{\xi^*}x{(1, q)}$, and so the
claim follows by
the induction hypothesis for~$r$.

Suppose $\typingG {r\,s}{\tau}$, with $\typingG r{\sigma\arrow\tau}$
and $\typingG s\sigma$, and fix $\xi\bounded\xi^*$.
By unraveling definitions, 
\[
\trans{rs}\xi^{\ast} = 
\dally(1+(\trans r\xi^{\ast})_c + (\trans s\xi^{\ast})_c,
      {(\trans r\xi^{\ast})_p (\trans s\xi^{\ast})_p}).
\]

Suppose
$\eval {\cl*{rs} {\xi}} {\cl v {\theta}}$ by the following evaluation rule:
\begin{prooftree}
      \AXC{$\eval {\cl r {\xi}} {\cl {\lambda x.r'} {\theta_1}}$}
      \AXC{$\eval {\cl s {\xi}} {\cl w {\theta_2}}$}
      \AXC{$\eval {\cl {r'} {\extend{\theta_1}x {\cl w\theta_2}}} {\cl v {\theta}}$}
\TIC{$\eval {\cl*{r s} {\xi}} {\cl v {\theta}}$}
\end{prooftree}
We first show that
\[
\cl {r'} {\extend{\theta_1}x {\cl w\theta_2}} \bounded
    {(\trans r\xi^{\ast})_p (\trans s\xi^{\ast})_p}.
\tag{$*$}
\]
Since $\eval {\cl  r {\xi}} {\cl* {\lambda x.r'} {\theta_1}}$, we know that
$\cl* {\lambda x.r'} {\theta_1} \vbounded
{\pot\trans r\xi^{\ast}}$. Hence if
$\cl z {\eta} \vbounded p$ then
$\cl {r'} {\extend{\theta_1}x {\cl z\eta}} \bounded
         {(\trans r\xi^{\ast})_p(p)}$.
Since $\cl s\xi\bounded\cl {\trans s}\xi^*$ and 
$\eval{\cl s\xi}{\cl w\theta_2}$, we have that
$\cl w {\theta_2} \vbounded {(\trans s\xi^{\ast})_p}$, and $(*)$ follows.

To establish the cost bound, 
we compute
\begin{align*}
\cost(\cl*{rs}{\xi}) &= 1 + \cost(\cl r {\xi})
+ \cost(\cl s {\xi}) + \cost(\cl {r'} {\theta_1[x \mapsto w\theta_2]}) \\
&\le 1 + \cost(\trans r\xi^{\ast}) + \cost(\trans s\xi^{\ast}) +{} \cost
\left((\trans r\xi^{\ast})_p (\trans s\xi^{\ast})_p)\right) \\
&= \cost(\trans{rs}\xi^{\ast})
\end{align*}
with the inequality following from the induction hypotheses and $(*)$. 
The potential bound follows from $(*)$ and the fact that
$\eval{\cl{r'}{\theta_1[x\mapsto \cl w\theta_2]}}{\cl v\theta}$.

{%
\def\size#1{\cost(#1)}
\def\clos#1#2{\cl{#1}{#2}}
\let\ble\bounded
\let\Empty\emptyenv
\def\judge#1#2#3#4{\eval{\cl{#1}{#2}}{\cl{#3}{#4}}}
\def\pjudge#1#2#3#4{\eval{\cl*{#1}{#2}}{\cl{#3}{#4}}}
\def\bpjudge#1#2#3#4{\cl*{#1}{#2}\bounded{#4}}
\def\bjudge#1#2#3#4{\cl{#1}{#2}\bounded{#4}}
\def\pbjudge#1#2#3#4{\cl{#1}{#2}\vbounded{#4}}
\def\bigdally#1#2{\dally(#1,#2)}
Suppose
$\typingG {\foldexp r s x {xs} w t}{\tau}$ and set
$t' = \extend{\trans t}{x,xs}{(1,p),(1,ps)}$ so that
\[
\trans{\foldexp r s x {xs} w t} =
\pfoldexp {{\trans r}_p} {\trans s} p {ps} w t'.
\]
Suppose $\eval{\cl r\xi}{\cl{ns}{}}$;
we prove the claim by induction on $ns$.
Suppose $ns=(\,)$.  We start by computing
the cost:
\begin{align*}
\cost(\cl{\foldexp r s x {xs} w t}{\xi})
&= 1 + \size{\clos r {\xi}} + \size{\clos s {\xi}} \\ 
&\le 1 + (\trans r \xi^{\ast})_c + (\trans s \xi^{\ast})_c \\
&\le 1 + (\trans r \xi^{\ast})_c
  + (\den{\pfoldexp {\trans r_p} {\trans s} p {ps} w {t^\prime}}{\xi^{\ast}})_c \\
&= \cost(\trans{\foldexp r s x {xs} w t} \xi^{\ast}).
\end{align*}
The second inequality is an equality if ${\trans r}_p = 0$,
or follows from Lemma~\ref{lem:cost_pfold_rec} if ${\trans r}_p > 0$.
Turning to the potential bound,
if 
$\pjudge {\foldexp r s x {xs} w t} {\xi} v {\theta}$ then 
$\judge s {\xi} v {\theta}$ as well and so
$\pbjudge v {\theta} {\tau} {\pot(\trans s \xi^\ast)}$.
If ${\trans r}_p = 0$, this suffices to verify the claim.
If ${\trans r}_p > 0$, use Lemma~\ref{lem:boundmax}.
This finishes
the case in which $\eval{\cl r\xi}{(\,)}$.

Suppose $\eval{\cl r {\xi}} {(n,ns)}$, where
$ns = (n_0,\dots,n_{k-1})$.
By the induction hypothesis for~$r$,
$\trans r_p \xi^{\ast} = {q+1}$ for some $q \ge k$.
Make the following definitions:
\begin{itemize}
\item $\xi_0 = \extend\xi y{{ns}}$; 
	$\xi_0^* = \extend{\xi^*}y q$.
\item $\rec^t = \foldexp y s x {xs} w t$.
\item $\rec = \den{\pfoldexp y {\trans s} p {ps} w {t^\prime}}{\xi_0^{\ast}}$.
\end{itemize}
By Lemma~\ref{lem:fold_rec}, 
$\bjudge {\rec^t} {\xi_0} {\tau} {\dally(2, \rec)}$.
So if $\judge {\rec^t} {\xi_0} {v'} {\theta'}$
then $\pbjudge {v'} {\theta'} {\tau} {\rec_p}$.  Now set
\begin{itemize}
\item $\xi_1 = \extend\xi{x,xs,w}{n, ns, v'\theta'}$.
\item $\xi^{\ast}_1 = \extend{\xi^{\ast}}{p,ps,w}{1, q, (1,\rec_p)}$.
\end{itemize}
By Lemma~\ref{lem:env_ext} $\xi_1\bounded\xi_1^*$,
so by the induction hypothesis for~$t$ and Lemma~\ref{lem:substitution}, 
we have $\bjudge t {\xi_1} {\tau} {\den{t'}\xi^{\ast}_1}$.

For the cost bound when 
$\eval{\cl r \xi}{{(n,ns)}}$, we have
\begin{align*}
\cost(\cl{\foldexp r s x {xs} w t}\xi)
&= 1 + \size{\clos r {\xi}} + \size{\rec^t \extend\xi{y}{ns}}
   + \size{\clos t {\xi_1}} \\
&\le 1 + (\trans r \xi^{\ast})_c + (2 + \rec_c) + (\den{t'}{\xi^{\ast}_1})_c \\
&= 1 + (\trans r\xi^{\ast})_c +
\cost(\den{\pfoldexp {\trans r_p} {\trans s} p {ps} w {t^\prime}}{\xi^{\ast}}) \\
&= \cost(\trans{\foldexp r s x {xs} w t}\xi^{\ast})
\end{align*}
For the potential bound, suppose $\pjudge {\foldexp r s x {xs} w t} {\xi} v {\theta}$.
Then we must have $\judge t {\xi_1} v {\theta}$ as well;
hence $\pbjudge v {\theta} {\tau} {\pot(\den{t'}{\xi^{\ast}_1})}$.
So by~(\ref{lem:boundmax}), 
\[
\pbjudge v {\theta} {\tau} {\pot (\trans s \xi^{\ast}) \vee \pot (\den{t'}{\xi^{\ast}_1})} = {}
\pot(\trans{\foldexp r s x {xs} w t}\xi^{\ast}).
\]
}%
\end{proof}

\section{Implementation in Coq}
\label{sec:implementation}

We have implemented a subset of the target and complexity languages in Coq
that includes the simply typed $\lambda$-calculus with integer and boolean
operations. For this subset we have implemented the translation function
and proof of the Soundness Theorem.
The current development may be found at 
\begin{center}
\url{http://wesscholar.wesleyan.edu/compfacpub}.  
\end{center}
When complete,
the formalization will provide a mechanism for certified
upper bounds on the cost of programs in the target language.
Since the denotational semantics is built on Coq's built-in
type system, one can use the formalized Soundness Theorem
to establish a bound on a given target program, then continue
to reason in Coq to simplify the bound, for example
establishing a closed form if desired.

Although we use Coq to formalize the translation, our system 
is really a blend of external and internal verification.
Our long-term goal is 
to be able to translate from programs written
in a language such as SML or OCaml.  Such programs
would not have cost information directly associated with them via
annotations (such as done by
\citet{danielsson:popl08}) or a type system.
In this sense, our approach follows that of~CFML 
\citep{chargueraud:icfp10}, in which
Caml source code is translated into a formula that can be used
to verify post-conditions.

The main non-trivial aspect of the development is the definition
of the bounding relation.
The bounding relation is
a simultaneous recursive definition to two relations,
$\bounded$ and $\vbounded$.  In Coq the difficulty arises in defining
$\bounded_{\sigma\arrow\tau}$ in
terms of $\vbounded_{\sigma\arrow\tau}$ (in turn defined
in terms of $\bounded_\tau$), which is not a structural descent on type.  
We resolve this by defining
subsidiary versions of $\bounded$ and $\vbounded$ that take a
natural number argument, and which are structurally decreasing on
that argument.  $\bounded$ and $\vbounded$ are then defined in
terms of these relations, using a numeric value sufficiently high
that the inductive definition is guaranteed to terminate by reaching
a base type, rather than a value of~$0$ for the numeric argument.%
\footnote{We also implemented a version in which $\vbounded$ is
inlined into the definition of $\bounded$ as an anonymous fixpoint,
but found that it became even more tedious to prove the required
lemmas, because we often have to reason specifically about $\vbounded$.}
With this out of the way, the proof of the Soundness Theorem proceeds
more-or-less as described in this paper.

\section{Related work}
\label{sec:related_work}

The core idea of this paper is not new.  
There is a reasonably extensive literature over the last several
decades on (semi-)automatically constructing resource bounds
from source code.
The first work naturally concerns itself with first-order
programs.
\citet{wegbreit:cacm75} describes a system for analyzing
simple Lisp programs that produces closed forms that bound
running time.  An interesting aspect of this system is that it
is possible to describe probability distributions on the input
domain (e.g., the probability that the head of an input list
will be some specified value), and the generated bounds incorporate
this information.
\citet{rosendahl:auto_complexity_analysis} proposes a system
based on step-counting functions and abstract interpretation
for a first-order subset of Lisp.  More recently the
COSTA project (see, e.g., \citet{albert-et-al:cost-analysis-java})
has focused on automatically computing resource bounds for
imperative languages (actually, bytecode).

\citeauthor{lematayer:toplas88}'s \citeyearpar{lematayer:toplas88}
ACE system is a two-stage system
that first converts FP\footnote{I.e., Backus' language 
FP \cite{backus:fp}.} programs to recursive FP programs describing 
the number of recursive calls of the target program, then
attempts to transform the result using various program-transformation
techniques to obtain a closed form.
\citet{jost-et-al:popl10} describe a formalism for automatically
infering linear resource bounds on higher-order programs,
provided of course that such bounds are correct, and
\citet{hoffmann-hofmann:esop10} extend this work to handle
polynomial bounds.
This system involves a type system analogous to the target-language
system, but in which the types are annotated with variables
corresponding to resource usage.  Type inference in the annotated
system comes down to solving a set of constraints among these
variables.

Here let us delve a little more into the systems
mentioned in the introduction, and just those aspects concerned
with a general analysis of higher-type call-by-value languages.
\citet{shultis:complexity} defines a denotational semantics for
a simple higher-order language that models both the value and the
cost of an expression.  As a part of the cost model, he develops a
system of ``tolls,'' which play a role similar to the potentials
we define in our work.  The tolls and the semantics are not used
directly in calculations, but rather as components in a logic for
reasoning about them.
\citet{sands:thesis} puts forward a translation scheme in which
programs in a target language are translated into programs \emph{in
the same language} that
incorporate cost information; several target languages are discussed,
including a higher-order call-by-value language.  Each identifier~$f$
in the target language is associated to a \emph{cost closure} that incorporates
information about the value $f$ takes
on its arguments; the cost of applying~$f$ to arguments; and arity.
Cost closures are intended to address the same issue our higher-type
potentials do:  recording information about the future cost
of a partially-applied function.
\citet{van-stone:thesis} annotates the operational semantics for
a higher-order language with cost information.  She then
defines a category-theoretic denotational
semantics that uses ``cost structures'' (which are related to monads) 
to capture cost information and shows
that the latter is sound with respect to the former.
\citet{benzinger:tcs04} annotates NuPRL's call-by-name operational semantics
with complexity estimates.  The language for the annotations is left
somewhat open so as to allow greater flexibility.  The analysis of
the costs is then completed using a combination of NuPRL's proof
generation and Mathematica.
Benzinger's is the only system to explicitly involve automated theorem
proving, though Sand's could also do so.  

These last formalisms are closest to ours in approach, but
differ from ours in a key respect:  the cost domain incorporates information
about values in the target domain so as to provide exact costs,
whereas our approach focuses on upper bounds on costs in terms of
input size.  We are hopeful that our system proves amenable to analyzing
complex programs, but there is much work yet to be done.

\section{Conclusions and further work}
\label{sec:concl_further_work}

We have described a static complexity analysis for a higher-order language
with structural list recursion that yields
an upper bound on the evaluation cost of any typeable program in the
target language.  It proceeds by translating each target-language program~$t$
into a program~$\trans t$ in a complexity language.  
We prove a Soundness Theorem
for the translation that has as a consequence that the cost component
of~$\trans t$ is an upper bound on the evaluation cost of~$t$.
By formalizing the translation and proof of the Soundness Theorem
in Coq, we obtain a machine-checkable certification of that upper
bound on evaluation cost.

The language described here supports only structural recursion on lists;
an obvious extension would be to handle general recursion.  This
should be straightforward if we require the user to supply
a proof of termination of the program to be analyzed.  However,
it should be possible to define
the operational semantics of the target language co-inductively
(as done by, e.g., \cite{leroy-grall:ic09-coind-big-step}), thereby
allowing explicitly for non-terminating computations.  The complexity
language semantics would then have to be adapted so that the
denotation of a recursive complexity function may be partial; the
foundations for such denotational semantics have already been
carried out by
\citet{paulin-mohring:den-semantics-coq} and
\citet{benton-et-al:domain-theory-coq}.  Indeed, one could
then hope to prove termination by extracting complexity bounds and
then proving that these bounds in fact define total functions.

Although we have adapted our formalism to a few other inductively-defined
datatypes (e.g., binary trees), we suspect that there are hidden difficulties
in generalizing the system to handle arbitrary inductive definitions.
One such difficulty might be, e.g., binary trees in which the external
nodes are labeled by values of another inductively-defined datatype.
If we wish to consider both trees and labels as contributing to 
size (potential), then it seems that only external nodes labeled
by size-$0$ values have potential~$0$.  That in turn
may make it difficult to develop useful cost bounds for functions
that ignore the labels.  
\citet{jost-et-al:popl10} deal with this issue by annotating the
type constructors with resource information, and hence automatically
account for the resource information for all types of objects
reachable from the root of a given value.  We have investigated
a similar approach in our setting, in which potential types
mirror the target language types.  For example, we would define
$\typot{\lstinline!$\sigma\,$ list!}$ to be
essentially \lstinline!$\typot\sigma$ list!.
This becomes rather burdensome in practice, and a mechanism for minimizing
the overhead when the generality is not desired would be a necessity.

Another obvious direction would be to handle
different evaluation strategies and notions of cost.
Compositionality is
a thorny issue when considering call-by-need evaluation and lazy
datatypes, and it may be that amortized cost is at least
as interesting as worst-case cost.
\citet{sands:thesis}, \citet{van-stone:thesis},
and \citet{danielsson:popl08} address laziness in their work.
The call-by-push-value paradigm \citep{levy:tlca99} gives
an alternative perspective on our complexity analysis. Call-by-push-value
disinguishes values from computations in a monadic-like approach
under the maxim ``a value is, a computation does.'' With this in
mind, we might adopt the following statement with respect to complexities:
``potential measures what is, cost measures what happens.'' An alternative
presentation of our work might utilize a call-by-push-value target language
to emphasize the distinction between computation expressions
and value expressions and what those mean for the complexity analysis. 

The development in Coq currently works directly with the expressions
of the form~$\trans t$.  These are moderately messy, as
can be seen from the examples.  It
would be nice to provide a more elegant presentation of these complexities
along the lines as the discussion of the examples.
In the same vein, it would be very useful to develop tactics that
allow users to transform complexities into simpler ones, all the while
ensuring the appropriate bounds still hold.
\citet{benzinger:tcs04} addresses this idea, as do
\citet{albert-et-al:jar11} of the COSTA project.
Another relevant
aspect of the COSTA work is that their cost relations use non-determinism
where we have used a maximization operation to handle conditional
constructs; it would be very interesting to see how their approaches
play out in our context.
These tactics could be then applied manually or automatically according
to the user's preferences using Coq's built in automation tools.
Ultimately we should have
a library of tactics for transforming the recurrences produced by
the translation function to closed (possibly asymptotic) forms when possible.

\subsection*{Acknowledgments}
We would like to thank the referees for several helpful suggestions
including pointers to additional related work.

\bibliographystyle{abbrvnat}
\bibliography{semi_auto}

\end{document}